\crefname{figure}{Figure}{Figures}
\newcolumntype{L}[1]{>{\raggedright\let\newline\\\arraybackslash\hspace{0pt}}m{#1}}
\newcolumntype{C}[1]{>{\centering\let\newline\\\arraybackslash\hspace{0pt}}m{#1}}
\newcolumntype{R}[1]{>{\raggedleft\let\newline\\\arraybackslash\hspace{0pt}}m{#1}}
\newcommand{\clrstr}{20} 
\newcolumntype{L}[1]{>{\raggedright\let\newline\\\arraybackslash\hspace{0pt}}m{#1}}
\newcolumntype{C}[1]{>{\centering\let\newline\\\arraybackslash\hspace{0pt}}m{#1}}
\newcolumntype{R}[1]{>{\raggedleft\let\newline\\\arraybackslash\hspace{0pt}}m{#1}}
\DeclareMathOperator{\sw}{\textsf{uw}}
\DeclareMathOperator{\cov}{\textsf{cov}}
\DeclareMathOperator{\uA}{\overline{s}}
\newcommand{\AVmax}{W^*}
\DeclareMathOperator{\supp}{mms}
\newcommand{\AV}[1]{\overline{AV}(#1)}
\newcommand{\CC}[1]{\overline{CC}(#1)}
\newcommand{\pricea}{affordable\xspace}
\theoremstyle{plain}
\newtheorem{theorem}{Theorem}
\newtheorem{example}{Example}
\newtheorem{definition}{Definition}
\title{Completing Priceable Committees:\\
Utilitarian and Representation Guarantees for Proportional Multiwinner Voting
}
\author{}
\date{\today}
\title{Completing Priceable Committees:\\
Utilitarian and Representation Guarantees for Proportional Multiwinner Voting}
\author{Markus Brill\footnote{University of Warwick, UK; email: markus.brill@warwick.ac.uk} \and Jannik Peters\footnote{TU Berlin, Germany; email: jannik.peters@tu-berlin.de} }
\date{}
\begin{document}

\maketitle

\begin{abstract}
When selecting committees based on preferences of voters, a variety of different criteria can be considered. Two natural objectives are maximizing the \textit{utilitarian welfare} (the sum of voters' utilities) and \textit{coverage} (the number of represented voters) of the selected committee. Previous work has studied the impact on utilitarian welfare and coverage when requiring the committee to satisfy minimal requirements such as \textit{justified representation} or \textit{weak proportionality}. In this paper, we consider the impact of imposing much more demanding proportionality axioms. We identify a class of voting rules that achieve strong guarantees on utilitarian welfare and coverage when combined with appropriate completions. This class is defined via a weakening of priceability and contains prominent rules such as the Method of Equal Shares. We show that committees selected by these rules (i) can be completed to achieve optimal coverage and (ii) can be completed to achieve an asymptotically optimal approximation to the utilitarian welfare if they additionally satisfy EJR+. Answering an open question of Elkind et al. (2022), we use the \textit{Greedy Justified Candidate Rule} to obtain the best possible utilitarian guarantee subject to proportionality. We also consider completion methods suggested in the participatory budgeting literature and other objectives besides welfare and coverage.
\end{abstract}

\section{Introduction}

In multiwinner voting, a subset of candidates (often called a \textit{committee}) needs to be chosen in a way that reflects the preferences of a set of voters over these candidates. The (computational) social choice literature has identified many different, often competing, criteria for committees \citep{FeMa92b,EFSS17a,LaSk22a}. In particular, \citet{FSST17a} distinguish between three important desiderata referred to as \textit{individual excellence}, \textit{diversity}, and \textit{proportional representation}. 

This paper focuses on \textit{approval-based} multiwinner voting, where voters cast approval ballots over individual candidates \citep{LaSk22a}. In this setting, it is commonly assumed that each voter evaluates a committee by counting the number of approved committee members. This number is often referred to as the \textit{utility} that the voter derives from the committee, and a voter is said to be \textit{represented} by a committee if their utility is nonzero. 
Using these conventions, the first two desiderata of \citet{FSST17a} have straightforward interpretations:
Individual excellence is measured by the \textit{utilitarian welfare} of a committee, i.e., the sum over the voters' utilities. Utilitarian welfare is maximized when choosing the candidates with the highest number of approvers, and the voting rule that selects this committee is known as \textit{Approval Voting (AV)}.
The diversity of a committee is captured by its \textit{coverage}, i.e., the number of voters that are represented by it. 
The (computationally intractable) voting rule that selects committees maximizing coverage is known as \textit{Chamberlin--Courant (CC)}. 
The third desideratum, proportional representation, is much harder to formalize, and the literature has identified a host of different (axiomatic and quantitative) measures to assess the proportionality of a committee \citep{ABC+16a, SFF+17a, Skow21a, PPS21a, BrPe23a}; a prominent example is the \textit{extended justified representation (EJR)} axiom.  

The study of trade-offs between the three desiderata has been initiated by \citet{LaSk20b}, who analyzed how closely classic multiwinner rules such as PAV and Phragmén's rules 
approximate the optimal utilitarian welfare and coverage. The resulting bounds are referred to as \textit{utilitarian} and \textit{representation guarantees}, respectively. Besides guarantees for specific rules, they also proved upper bounds on the utilitarian and representation guarantees of rules satisfying minimal proportionality requirements. In their follow-up work, \citet{EFI+22a} studied which guarantees can be achieved if the committee is required to satisfy the \textit{justified representation (JR)} axiom. While \citet{LaSk20b} gave an upper bound of 
${2}/{\lfloor \sqrt{k}\rfloor} - {1}/{k}$
for the utilitarian guarantee of any committee satisfying
JR\,---\,with a higher guarantee being better\,---\,\citet{EFI+22a} were almost able to close this gap by showing that a utilitarian guarantee of $\frac{2-\varepsilon}{1 + \sqrt{k}}$ is possible for any $\varepsilon > 0$ and sufficiently large $k$.\footnote{
Here, $k$ denotes the size of the committee.
\citet{LaSk20b} proved the upper bound for an axiom they call \textit{weak proportionality}, but\,---\,as \citet{EFI+22a} observe\,---\,the bound also holds for JR. Moreover, we note that \citet{EFI+22a} phrase their bounds in terms of the ``price of JR,'' with lower prices being better: a price of $P$ corresponds to a guarantee of $1/P$.}
Further, they showed that a representation guarantee of $\frac{3}{4}$ is possible in conjunction with EJR and that both guarantees can be simultaneously approximated close to optimal while requiring JR. They achieve the bounds for JR by using a variant of the \textit{sequential Chamberlin--Courant} rule,
and the bound for EJR is achieved by the somewhat unnatural \textit{GreedyEJR} rule. Both of these rules construct committees by matching committee members to voters approving them (see \Cref{sec:affordable} for details), in a way that is reminiscent of \textit{priceability} and rules such as the \textit{Method of Equal Shares (MES)} \citep{PeSk20b}. These rules suffer from a common drawback: They often select strictly fewer than the desired number of candidates, and thus require a \textit{completion method} to output a full committee.

\paragraph{Our Contribution}

In this paper, we study which representation and utilitarian guarantees are possible when requiring much more demanding proportionality axioms than those considered by \citet{LaSk20b} and \citet{EFI+22a}.
We identify a class of voting rules that achieve strong guarantees on utilitarian welfare and coverage when combined with appropriate completion methods. 
This class, defined via a weakening of priceability \citep{PeSk20b} we call \textit{affordability}, contains prominent voting rules such as MES. %

We show that any affordable committee, when completed with the CC rule, achieves an optimal representation guarantee of $\frac{3}{4}$. This general result has several advantages over earlier results on representation guarantees: First, our result implies that any proportionality axiom which is compatible with affordability, such as EJR+ \citep{BrPe23a} or FJR \citep{PPS21a}, is also compatible with a representation guarantee of $\frac{3}{4}$. Second, our result uses much simpler rules: Instead of the unnatural and computationally intractable GreedyEJR rule, we can use attractive rules such as MES to achieve this bound. 

For utilitarian guarantees, we show that any affordable committee that additionally satisfies EJR+ (which is more demanding than EJR but still satisfied by MES), completed with AV, achieves a utilitarian guarantee of $\Omega(\frac{1}{\sqrt{k}})$. We further show that the recently introduced \textit{Greedy Justified Candidate Rule (GJCR)} \citep{BrPe23a}, completed with AV, achieves a utilitarian guarantee of exactly $\frac{2}{\sqrt{k}} - \frac{1}{k}$, which is the best possible guarantee for proportional rules. This answers an open question by \citet{EFI+22a}, who were only able to obtain this bound asymptotically. 
Moreover, we establish an interesting distinction between EJR and EJR+, by showing that GreedyEJR (which satisfies EJR but not EJR+) does not achieve a utilitarian guarantee of $\Omega(\frac{1}{k^c})$ for any $c < 1$, even on instances where it is exhaustive, i.e., where it selects exactly $k$ candidates. %
Thus, EJR is not sufficient for strong utilitarian guarantees.  

We also study trade-offs between utilitarian and representation guarantees and show that one can simultaneously achieve a representation guarantee of $\frac{3}{4}-o(1)$ and a utilitarian guarantee of $\frac{2}{\sqrt{k}} - \frac{1}{k}$ together with EJR+, 
improving and simplifying a similar result by \citet{EFI+22a} for~JR.

Finally, we consider completion methods that are relevant for specific applications of multiwinner voting: completion by the maximin support method \citep{SFFB18a}, which is relevant for the security of blockchain systems \citep{CeSt21a}, and completion by perturbation or by varying the budget, which have been discussed in the participatory budgeting (PB) literature \citep{ReMa23a}.

\paragraph{Related Work}

Besides the papers by \citet{LaSk22a} and \citet{EFI+22a}, utilitarian and representation guarantees have been studied 
in the context of PB by \citet{FVMG22a}, who proved some guarantees, but mostly impossibility results.
The topic of completing non-exhaustive voting rules is especially relevant in PB, where various ways to complete the Method of Equal Shares have been discussed and analyzed on real-world data \citep{PPS21a, FFP+23a, BFJK23a}. One of the completion methods we study in this paper, ``completion by varying the budget,'' was recently used in a real-world PB election in the Polish city of Wielizka (see \url{https://equalshares.net} for details).

In a similar spirit as the paper by \citet{EFI+22a}, \citet{MPS20a} and \citet{TWZ20a} studied the ``price of fairness'' in the setting of budget division (or probabilistic social choice).

\section{Preliminaries}

We are given a finite set $C = \{c_1, \dots, c_m\}$ of $m>0$ \emph{candidates} and a finite set $N = \{1, \dots, n\}$ of $n>0$ \emph{voters}. Voters cast \emph{approval ballots} over candidates: for each voter $i \in N$, the set $A_i \subseteq C$ consists of the candidates that are {approved} by voter $i$. Together $A = (A_i)_{i \in N}$ forms an \emph{approval profile}. For a candidate $c \in C$, we call the members of the set $N_c = \{i \in N \colon c \in A_i\}$ the \emph{approvers} of $c$.  Further, we are given a \textit{committee size} $k \le m$. We call any set $W \subseteq C$ of size $\lvert W \rvert \le k$ a \emph{committee}. If $\lvert W \rvert = k$, we say that $W$ is \emph{exhaustive}. Finally, we call $\mathcal{I} = (A, C, k)$ an \emph{instance}. 

A \emph{(multiwinner voting) rule} $r$ is a function mapping instances $\mathcal{I} = (A, C, k)$ to non-empty sets of committees $r(\mathcal{I})$. We often say that a rule ``satisfies'' a property if, for each instance $\mathcal{I}$, all committees in $r(\mathcal{I})$ satisfy the property.

\paragraph{Proportionality Axioms}
First, we consider the classical justified representation axioms of \citet{ABC+16a}. For an integer $\ell \in \mathbb{N}$, a group $N' \subseteq N$ of voters is said to be \emph{$\ell$-large} if $\lvert N' \rvert \ge \ell  \frac{n}{k}$ and \emph{$\ell$-cohesive} if $\left\lvert \bigcap_{i \in N'} A_i \right\rvert \ge \ell$. The \emph{justified representation (JR)} axiom requires that for any $1$-large and $1$-cohesive group $N'$, there exists some $i \in N'$ with $\lvert A_i \cap W \rvert \ge 1$. \emph{Extended Justified Representation (EJR)} requires that for all $\ell \in [k]$ and for any $\ell$-large and $\ell$-cohesive group $N'$ there exists some $i \in N'$ with $\lvert A_i \cap W \rvert \ge \ell$. Finally, the recently introduced \emph{EJR+} axiom requires that for any $\ell$-large and $1$-cohesive group $N'$, there either exists some $i \in N'$ with $\lvert A_i \cap W \rvert \ge \ell$ or it holds that $\bigcap_{i \in N'} A_i \subseteq W$, for all $\ell \in [k]$. It is easy to see that EJR+ implies EJR which in turn implies JR. 
We also use the following quantitative notion \citep{BrPe23a}.

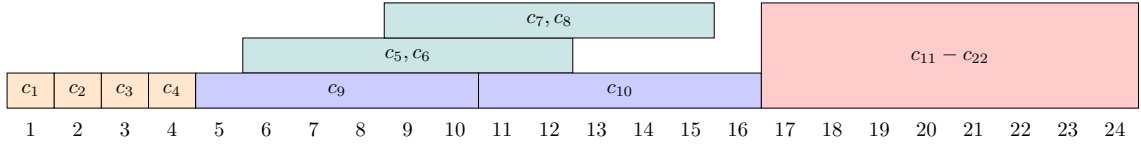
\begin{figure*}
    \centering
        \scalebox{0.775}{
    \begin{tikzpicture}
    [yscale=0.6,xscale=0.8,
    voter/.style={anchor=south}]
    
        \foreach \i in {1,...,24}
    		\node[voter] at (\i-0.5, -1) {$\i$};
        
        \draw[fill=orange!\clrstr] (0, 0) rectangle (1, 1);
        \draw[fill=orange!\clrstr] (1, 0) rectangle (2, 1);
        \draw[fill=orange!\clrstr] (2, 0) rectangle (3, 1);
        \draw[fill=orange!\clrstr] (3, 0) rectangle (4, 1);
        \draw[fill=teal!\clrstr] (5, 1) rectangle (12, 2);
        \draw[fill=teal!\clrstr] (8, 2) rectangle (15, 3);
        \draw[fill=blue!\clrstr] (4, 0) rectangle (10, 1);
        \draw[fill=blue!\clrstr] (10, 0) rectangle (16, 1);
        \draw[fill=red!\clrstr] (16, 0) rectangle (24, 3);

        \node at ( 0.5, 0.5) {$c_{1}$};
        \node at ( 1.5, 0.5) {$c_{2}$};
        \node at ( 2.5, 0.5) {$c_{3}$};
        \node at ( 3.5, 0.5) {$c_{4}$};
        \node at ( 8.5, 1.5) {$c_{5}, c_{6}$};
        \node at ( 11.5, 2.5) {$c_{7}, c_{8}$};
        \node at ( 7, 0.5) {$c_{9}$};
        \node at ( 13, 0.5) {$c_{10}$};
        \node at ( 20, 1.5) {$c_{11}-c_{22}$};
    \end{tikzpicture}}
    \caption{Instance with $n=24$ voters and $m=22$ candidates considered in \Cref{exp:rules}. Voters correspond to integers and approve candidates placed above them, e.g.,  voter $1$ approves candidate $c_1$ and voter $9$ approves candidates $c_5, c_6, c_7, c_8, c_9$. }
    \label{fig:example}
\end{figure*}

\begin{definition}
    Given a committee $W$,  a candidate $c \in C\setminus W$ is \emph{$(t, \ell )$-represented} if there is no $\ell$-large group $N' \subseteq N_c$ with $\frac{1}{\lvert N'\rvert} \sum_{i \in N'} \lvert A_i \cap W \rvert < t$. For a given function \mbox{$f \colon \! \mathbb{N} \!\to \! \mathbb{R}$}, a committee is \emph{$f$-representative} if every unchosen candidate $c \in C \setminus W$ is $(f(\ell), \ell)$-represented for any $\ell \in [k]$.
\end{definition}
This is a stronger notion than the \textit{proportionality degree} \citep{Skow21a}, in the sense that an $f$-representative committee (or rule) also has a proportionality degree of $f$, as the definition just removes the cohesiveness requirement inherent to the proportionality degree. The representativeness of a committee is polynomial-time computable, and as we show later, contrary to the proportionality degree, 
it allows us to obtain bounds on the utilitarian guarantee of a voting rule. %

\paragraph{Rules}
Given an instance $\mathcal{I}=(A,C,k)$ and a committee $W \subseteq C$, we let 
$\sw(W) = \sum_{i \in N} \lvert A_i \cap W\rvert$
denote the \emph{utilitarian welfare} of $W$ and $\cov(W) = \lvert \{i \in N \colon A_i \cap W \neq \emptyset\} \rvert$ the \emph{coverage} of $W$.  
The \emph{Approval Voting (AV)} rule selects a committee of size $k$ maximizing the utilitarian welfare. %
The \emph{Chamberlin--Courant (CC)} rule selects a committee of size $k$ maximizing the coverage. %
The \emph{sequential Chamberlin--Courant (seq-CC)} rule  starts with an empty committee $W = \emptyset$ and greedily adds candidates $c \notin W$ maximizing $\cov(W \cup \{c\})$ until $W$ has size $k$.

Next, we define three rules we study throughout the paper. 
The \emph{Method of Equal Shares} (MES) \citep{PeSk20b} starts with an empty committee $W = \emptyset$ and with a budget $b_i = \frac{k}{n}$ for each voter $i \in N$.
Then, for each unselected candidate $c \notin W$, it calculates the value $\rho$ such that 
$\sum_{i \in N_c} \min(b_i, \rho) = 1$. 
If such a $\rho$ does not exist for any unselected candidate, MES terminates.  
Otherwise, it selects the candidate $c$ with the minimum $\rho$, adds it to $W$, and sets $b_i = b_i - \min(b_i, \rho)$ for each $i \in N_c$. 

The \emph{Greedy Justified Candidate Rule} (GJCR) \citep{BrPe23a} 
also starts with an empty committee $W = \emptyset$. GJCR then repeatedly checks if there is a candidate $c$, set $ N' \subseteq N_c $, and $\ell \in [k]$ with $\lvert N' \rvert \ge \frac{\ell n}{k}$ and $\lvert A_i \cap W\rvert < \ell$ for all $i \in N'$. If there is such a candidate, it picks the candidate maximizing $\lvert N'\rvert$ and adds it to $W$, otherwise it terminates.

The \emph{GreedyEJR} rule %
\citep{BFNK19a, PPS21a}  
works similarly to GJCR, but iterates over voter groups rather than candidates (which makes the rule computationally intractable). 
Starting with an empty committee $W$, GreedyEJR finds the largest $\ell$ such that there is an $\ell$-cohesive and $\ell$-large group $N'$, adds~$\ell$ candidates from $\bigcap_{i \in N'} A_i$ to $W$, and deletes all voters in $N'$; this is repeated until no further group can be found.  

MES and GJCR satisfy EJR+ \citep{BrPe23a}, whereas GreedyEJR satisfies EJR but not EJR+. All three rules may return non-exhaustive committees (\Cref{exp:rules}).

\paragraph{Utilitarian and Representation Guarantees}
For a given instance $\mathcal{I}$, we define the \textit{utilitarian ratio} of a committee $W$ as the fraction of the highest achievable utilitarian welfare in this instance, i.e.,
$\frac{\sw(W)}{\sw(AV(\mathcal{I}))}$. Analogously, $\frac{\cov(W)}{\cov(CC(\mathcal{I}))}$ is called the \textit{representation ratio} of $W$.

A utilitarian or representation \emph{guarantee} is a lower bound on the utilitarian or representation ratio of a committee or a set of committees (such as those output by a rule).

\paragraph{Completion Methods} 
For a non-exhaustive committee~$W$, we let $\AV{W}$ denote the completion of $W$ with the highest approval score. Formally, $\AV{W} = W \cup AV(\mathcal{I'})$, where $\mathcal{I'}=(A,C\setminus W,k-|W|)$.
Further, $\CC{W}$ is the completion of $W$ with the highest coverage. Formally,
$\CC{W} = W \cup CC(\mathcal{I''})$, where $\mathcal{I''}=(A',C\setminus W,k-|W|)$ and $A'$ 
 results from $A$ by deleting all voters represented by~$W$.
Completion via seq-CC is defined analogously.

\begin{example}
    Consider the instance depicted in \Cref{fig:example} with $n = 24, k = 12$, and $22$ candidates. 
    In this instance, AV selects the candidates $\{c_{11}, \dots, c_{22}\}$ to achieve a utilitarian welfare of $12 \cdot 8 = 96$. This committee, however, does not satisfy EJR (or even JR), as for instance voters $5$ to $10$ together with candidate $c_9$ witness a violation of JR (these voters deserve $6 \frac{k}{n} = 3$ seats on the committee). The optimal coverage is $n = 24$ and can be achieved, e.g., by $\{c_1, \dots, c_{12}\}$. 

    \textbf{MES} begins by assigning everyone a budget of $\frac{1}{2}$. Four candidates out  of $c_{11}$ to $c_{22}$ would be bought first, with a $\rho$ of $\frac{1}{8}$ each. Then, $c_5, c_6, c_7$ can be bought with $\rho = \frac{1}{7}$ each. The approvers of $c_8$ would then have a budget of $4 (\frac{1}{2} - \frac{3}{7}) + 3(\frac{1}{2} - \frac{2}{7}) = \frac{13}{14} < 1$, which is not enough to buy $c_8$. Both the approvers of $c_9$ and of $c_{10}$, on the other hand, have enough budget left to buy their candidate. The approvers of any of $c_1$ to $c_4$ have a budget of $\frac{1}{2}$ each and therefore, can not buy the candidate they approve. Hence, MES would (up to tie-breaking) return the committee $W = \{c_5, c_6, c_7, c_9, \dots, c_{14}\}$ with a utilitarian welfare of %
    $2 \cdot 1 + 3 \cdot 2 + 3 \cdot 3 + 12 \cdot 4 = 65$
    and a utilitarian ratio of
    $\frac{65}{96}$. Since it covers all but $4$ voters, its representation ratio is $\frac{20}{24} = \frac{5}{6}$. The completion $\CC{W}$ additionally contains three of $c_1, c_2, c_3$, leading to a representation ratio of $\frac{23}{24}$ and a utilitarian ratio of $\frac{68}{96}$, while the completion $\AV{W}$ would contain three of $c_{14}$ to $c_{22}$, leading to a utilitarian ratio of $\frac{86}{96}$ and an (unchanged) representation ratio of $\frac{5}{6}$.

    \smallskip
    \textbf{GJCR} could select $\{c_5, c_6, c_7, c_{10}, c_{11}, c_{12}, c_{13}, c_{14}\}$ (notably leaving out $c_9$, but still selecting $c_{10}$ since the voters $13$ to $16$ deserve two seats in the committee). Thus, GJCR completed with CC selects $c_1, \dots, c_4$, leading to a perfect representation ratio of $1$, while GJCR completed with AV selects four out of $c_{15}$ to $c_{22}$, reaching a utilitarian ratio of~$\frac{91}{96}$.

    \smallskip
    \textbf{GreedyEJR} could select $\{c_5, c_6, c_{10}, c_{11}, c_{12}, c_{13}, c_{14}\}$ by first considering the $4$-cohesive group $17$ to $24$, then the $2$-cohesive group $6$ to $12$ with candidates $c_5$ and $c_6$, and finally the $1$-cohesive group $13$ to $16$ with candidate $c_{10}$. Notably, this committee violates EJR+ since the approvers of $c_9$ would deserve $3$ seats, but get at most $2$. Since they do not form a cohesive group, they could not get selected by GreedyEJR and therefore GreedyEJR does not satisfy EJR+.
    \label{exp:rules}
\end{example}

\section{Affordable Committees}
\label{sec:affordable}

The three rules that play an important role in this paper are the Method of Equal Shares, the Greedy Justified Candidate Rule, and GreedyEJR. A common feature of these three rules is that they\,---\,explicitly or implicitly\,---\,construct (fractional) matchings between committee members and their approvers: each voter starts off with a ``voting weight'' of $k/n$ 
(so that the total voting weight per committee member is~$1$), and each time a candidate is added to the committee under one of these rules, there is $1$ unit of (previously unused) voting weight supporting this candidate, which then gets used up by the candidate.

This is reminiscent of the notion of \textit{priceability} \citep{PeSk20b}, which assumes that each voter owns an equal amount of ``money'' that can be spent on candidates they approve. %
A committee is called {priceable} if it is possible to allocate the money 
in such a way that each committee member is paid for and there is not enough leftover money to pay for the inclusion of further candidates.\footnote{Priceability has been studied in various forms \citep{PPSS21a,MSW22a, BFL+23a, LaMa23a,BrPe23a, KrEl23b}.}   

To formalize this, consider an instance $\mathcal{I} = (A, C, k)$. A \textit{payment system} is a collection $(p_i)_{i \in N}$ containing a function $p_i \colon C \to \mathbb{R}_{\ge 0}$ for each voter $i \in N$. A committee \mbox{$W\subseteq C$} is said to be \emph{priceable} if there exists a payment system $(p_i)_{i \in N}$ satisfying the following constraints:
\begin{itemize}
    \item[\textbf{C1}] $p_i(c) = 0$ if $c \notin A_i$ for all $c \in C$ and $i \in N$
    \item[\textbf{C2}] $\sum_{c \in C} p_i(c) \le \frac{k}{n}$ for all $i \in N$
    \item[\textbf{C3}] $\sum_{i \in N} p_i(c) = 1$ for all $c \in W$
    \item[\textbf{C4}] $\sum_{i \in N} p_i(c) = 0$ for all $c \notin W$
    \item[\textbf{C5}] $\sum_{i \in N_c} \left(\frac{k}{n} - \sum_{c \in C} p_i(c)\right) \le 1$ for all $c \notin W$.    
\end{itemize}

\noindent Note that the left-hand side of constraint \textbf{C5} exactly corresponds to the money leftover with approvers of candidate~$c$.

MES is probably the most prominent example of a priceable rule (i.e., a rule that always outputs priceable committees).
GJCR and GreedyEJR, on the other hand, do not necessarily output priceable committees as they might violate \textbf{C5} (while still satisfying \textbf{C1}--\textbf{C4}). As it turns out, constraints \textbf{C1}--\textbf{C4} are sufficient to prove strong guarantees. We call such committees affordable. 

\begin{definition}
A committee $W\subseteq C$ is \emph{affordable} if there is a payment system for $W$ satisfying constraints \textbf{C1}--\textbf{C4}.
\end{definition}

Clearly, every priceable committee is affordable. Affordability is a rather weak requirement: it simply requires that each committee member can be associated with a sufficient amount of voter support. Every subcommittee of an affordable committee is affordable, and so is the empty committee. 

Affordability (and thus priceability) is not generally compatible with exhaustiveness. For example, consider an instance with two voters and two candidates such that each candidate is approved by exactly one voter, and  let $k = 1$.

GJCR and GreedyEJR output affordable committees. Another example of a rule ensuring affordability is the variant of sequential CC employed by \citet{EFI+22a}. In light of the positive results on affordable committees in \Cref{sec:guarantees}, it is perhaps not surprising that both rules used by \citet{EFI+22a} satisfy this notion. However, as we will show, affordable rules beyond those considered by \citet{EFI+22a} can be used to obtain strictly stronger guarantees.

Rules returning priceable or affordable committees are among the most popular rules studied in the approval-based committee voting literature, 
as they are able to achieve strong proportionality guarantees with often quite simple (greedy) approaches. However, since priceable rules cannot be exhaustive, they require a completion step. The effects of this completion step are largely unexplored. By providing utilitarian and representation guarantees for the completions of affordable committees, our work contributes to the ongoing discussion on how to make priceable rules exhaustive.%
\footnote{There is variant of priceability that replaces $k$ in \textbf{C2} and \textbf{C5} with an arbitrary budget $B>0$. This weakening of priceability is compatible with exhaustiveness. Examples of exhaustive rules satisfying this property include Phragmén's sequential rule \citep{Phra95a}, leximax-Phragmén \citep{BFJL16a}, the Maximin Support Method \citep{SFFB18a}, and Phragmms \citep{CeSt21a}. Since these rules do not necessarily output affordable committees, our results in \Cref{sec:guarantees} do not apply to them. This explains why, e.g., the representation guarantee of Phragm\'{e}n's sequential rule is only $\frac{1}{2}$ \citep{LaSk20b}, whereas $\frac{3}{4}$ is possible by completing affordable committees. In \Cref{app:price}, we prove that every exhaustive rule satisfying this variant of priceability has a representation guarantee of exactly $\frac{1}{2}$.}

We begin our analysis of affordable committees with the following simple observation, which was already implicit in the work of \citet[Theorem~3.6]{EFI+22a} and \citet[Lemma~1]{PPS21a}. 
\begin{restatable}{observation}{obstriv}
    Let $W \subseteq C$ be an \pricea committee. Then the coverage of $W$ is at least $\frac{\lvert W \rvert}{k}n$.
    \label{obs:triv}
\end{restatable}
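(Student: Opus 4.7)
The plan is to exploit the payment system directly by a double-counting argument on the total amount of money spent. Let $(p_i)_{i\in N}$ be a payment system witnessing that $W$ is affordable.

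First, I would compute the total money spent in two ways. Summing over candidates and using constraint \textbf{C3}, the total payment equals $\sum_{c \in W}\sum_{i \in N} p_i(c) = |W|$. On the other hand, summing in the opposite order, it equals $\sum_{i \in N} \sum_{c \in C} p_i(c)$, since by \textbf{C4} no money is spent on candidates outside $W$.

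Next, I would identify which voters can spend positive amounts. By \textbf{C1}, voter $i$ only pays for candidates in $A_i$, and combined with \textbf{C4} the money $i$ spends goes exclusively to candidates in $W \cap A_i$. Hence if $A_i \cap W = \emptyset$ — i.e., if $i$ is not covered by $W$ — then $\sum_{c \in C} p_i(c) = 0$. Every other voter spends at most $k/n$ by \textbf{C2}. Therefore
\[
|W| \;=\; \sum_{i \in N}\sum_{c \in C} p_i(c) \;\le\; \cov(W) \cdot \frac{k}{n},
\]
which rearranges to $\cov(W) \ge \frac{|W|}{k} n$, as desired.

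There is no real obstacle here: the argument uses only constraints \textbf{C1}--\textbf{C4} (not \textbf{C5}), which is exactly why the statement holds at the level of affordability rather than priceability. The only thing to be careful about is invoking \textbf{C4} correctly to conclude that a voter not covered by $W$ cannot contribute to the payment of any chosen candidate.
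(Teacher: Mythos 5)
Your proof is correct and is essentially the paper's argument: both rest on the fact that the total payment is $|W|$ by \textbf{C3}, that only covered voters can contribute (by \textbf{C1} and \textbf{C4}), and that each contributes at most $k/n$ by \textbf{C2}. The paper phrases this as a contradiction via an averaging step, whereas you sum the bound directly, but the substance is identical.
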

\begin{proof}
    If the coverage of $W$ was less than $\frac{\lvert W \rvert}{k}n$, then there must exist a represented voter who pays more than $\frac{\lvert W \rvert }{\frac{\lvert W \rvert}k n} = \frac{k}{n}$, and thus $W$ could not have been \pricea.
\end{proof}
In particular, this implies that an exhaustive affordable committee covers all voters and thus has a coverage of $n$. 
We can also show that any exhaustive affordable committee has a utilitarian guarantee of $\Omega(\frac{1}{k})$. 

\begin{restatable}{observation}{obsexhaust}
    Let $W$ be an exhaustive \pricea committee. Then $W$ has a representation guarantee of $1$ and a utilitarian guarantee of at least $\frac{1}{k}$. 
    \label{prop:exhaustive_price}
\end{restatable}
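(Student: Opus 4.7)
My plan is to handle the two guarantees separately, both as quick consequences of what we already have.

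For the representation guarantee, I would simply invoke the preceding Observation~\ref{obs:triv}: an affordable committee $W$ covers at least $\frac{|W|}{k}n$ voters, and exhaustiveness forces $|W|=k$, so the coverage is at least $n$. Since coverage cannot exceed $n$ and $\cov(CC(\mathcal{I}))\le n$, the representation ratio is exactly $1$. This step is essentially immediate.

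For the utilitarian guarantee, the idea is to read off a lower bound on the number of approvers of each committee member directly from the payment system. Let $(p_i)_{i\in N}$ witness the affordability of $W$. For any $c\in W$, constraint \textbf{C3} gives $\sum_{i\in N}p_i(c)=1$, while \textbf{C1} restricts the support of $p_i(\cdot)$ to $A_i$ (so only approvers of $c$ contribute) and \textbf{C2} implies $p_i(c)\le \frac{k}{n}$ for every $i$. Combining these, $|N_c|\cdot \frac{k}{n}\ge 1$, so $|N_c|\ge \frac{n}{k}$. Summing over the $k$ committee members yields $\sw(W)\ge n$. Since $\sw(AV(\mathcal{I}))\le kn$ (each of the $k$ AV-winners has at most $n$ approvers), the utilitarian ratio is at least $\frac{n}{kn}=\frac{1}{k}$, as claimed.

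There is no real obstacle here; the only thing to be careful about is extracting the $p_i(c)\le \frac{k}{n}$ bound on a single entry from the total budget constraint \textbf{C2} (which bounds the sum over $c$, but a single summand is of course bounded by the sum since payments are nonnegative). I would state this explicitly so the reader sees the per-candidate bound used in the counting argument.
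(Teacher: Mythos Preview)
Your proposal is correct and follows essentially the same route as the paper: the representation guarantee comes directly from Observation~\ref{obs:triv} applied with $|W|=k$, and the utilitarian guarantee follows from the observation that every committee member has at least $\frac{n}{k}$ approvers while any candidate has at most $n$, yielding a ratio of at least $\frac{n}{kn}=\frac{1}{k}$. You spell out the derivation of the $|N_c|\ge \frac{n}{k}$ bound from \textbf{C1}--\textbf{C3} more explicitly than the paper does, but the argument is the same.
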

\begin{proof}
    The representation guarantee immediately follows from \Cref{obs:triv}. For the utilitarian guarantee, we observe that each candidate in $W$ must have at least $\frac{n}{k}$ approvals, while each candidate outside $W$ can have at most~$n$, hence the utilitarian ratio is at least $\frac{n}{nk} = \frac{1}{k}$.
\end{proof}
\begin{sloppypar} While the representation guarantee is optimal, the utilitarian guarantee is far from: \citet{LaSk20b} have shown that proportional rules can achieve a utilitarian guarantee of $\Theta(\frac{1}{\sqrt{k}})$. 
In \Cref{sec:util}, 
we extend \Cref{prop:exhaustive_price} and show that any $f$-representative\,---\,with $f \in \Omega(\ell)$--- \pricea committee containing at least a constant fraction of candidates has a utilitarian guarantee of $\Omega(\frac{1}{\sqrt{k}})$. In particular, this will imply that both MES and GJCR have a utilitarian guarantee of $\Omega(\frac{1}{\sqrt{k}})$ when completed with AV.\footnote{Since MES and GJCR have been proposed relatively recently (and they are not exhaustive), these rules have not been considered in the paper by \citet{LaSk20b}.} \end{sloppypar}

\section{Utilitarian and Representation Guarantees}
\label{sec:guarantees}

We show that optimal utilitarian and representation guarantees can be achieved via completing affordable committees. %

\subsection{Utilitarian Guarantees}
\label{sec:util}

We begin with a general statement, lower bounding the utilitarian ratio of a committee %
based on the highest number of approvals of any unselected candidate. 

For a given committee $W$, let $\uA(W) = \max_{c \notin W} \lvert N_c \rvert $ 
denote the largest approval score of an unselected candidate. 

\begin{restatable}{lemma}{lemlargecom}
    Let $f \colon \mathbb N \to \mathbb R$ and let $W$ be an $f$-representative committee with $\uA(W) \ge t \frac{n}{k}$ for some $t \in \mathbb{N}$.
    Then, $W$ has a utilitarian guarantee of $\min\left(\frac{1}{2}, \frac{f(t)}{2k}\right)$.
    \label{lem:rep_bound}
\end{restatable}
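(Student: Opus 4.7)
The plan is to lower-bound $\sw(W)$ via $f$-representativeness and to upper-bound $\sw(AV(\mathcal{I}))$ using the cap $\uA(W)$ on approval scores of candidates outside $W$, then combine the two.

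First, I would pick a candidate $c^\star \in C \setminus W$ realizing $\uA(W)$, so that $\lvert N_{c^\star}\rvert = \uA(W) \ge tn/k$; in particular, $N_{c^\star}$ itself qualifies as a $t$-large subset of $N_{c^\star}$. Since $W$ is $f$-representative, $c^\star$ is $(f(t), t)$-represented, so applying the definition with $N' = N_{c^\star}$ yields $\sum_{i \in N_{c^\star}} \lvert A_i \cap W\rvert \ge f(t)\,\lvert N_{c^\star}\rvert = f(t)\,\uA(W)$. Because $\sw(W) = \sum_{i \in N} \lvert A_i \cap W\rvert$ and all summands are non-negative, this gives $\sw(W) \ge f(t)\,\uA(W)$.

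Second, let $A^\star$ be an optimal AV committee. Splitting $A^\star$ into $A^\star \cap W$ and $A^\star \setminus W$, the first part contributes $\sum_{c \in A^\star \cap W}\lvert N_c\rvert \le \sw(W)$ to $\sw(A^\star)$, while each of the at most $k$ candidates in $A^\star \setminus W$ has at most $\uA(W)$ approvers by definition of $\uA$. Hence $\sw(A^\star) \le \sw(W) + k\,\uA(W)$.

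Combining these two bounds, I would get $\sw(W)/\sw(A^\star) \ge \sw(W)/(\sw(W) + k\,\uA(W))$; the right-hand side is increasing in $\sw(W)$, so plugging in $\sw(W) \ge f(t)\,\uA(W)$ yields $\sw(W)/\sw(A^\star) \ge f(t)/(f(t)+k)$. A final case split finishes the argument: if $f(t) \ge k$, this quantity is at least $1/2$; if $f(t) < k$, then $f(t)+k \le 2k$ gives $f(t)/(f(t)+k) \ge f(t)/(2k)$. Together these yield the claimed lower bound of $\min(1/2,\, f(t)/(2k))$. The proof is essentially routine; the one observation worth flagging is that $N_{c^\star}$ itself serves as a valid $t$-large witness group in the definition of $(f(t), t)$-representation, which is exactly what turns the $f$-representativeness assumption into a usable lower bound on $\sw(W)$.
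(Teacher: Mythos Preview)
Your proof is correct and follows essentially the same approach as the paper: lower-bound $\sw(W)$ via $f$-representativeness applied to a highest-approval unchosen candidate, upper-bound $\sw(AV)$ using the cap $\uA(W)$, and combine. The only difference is cosmetic: the paper bounds $\sw(AV) \le \sw(W\cap \AVmax)+k\,\uA(W)$ and uses $\sw(W)\ge \max(\sw(W\cap \AVmax), f(t)\,\uA(W))\ge \tfrac{1}{2}(\sw(W\cap \AVmax)+f(t)\,\uA(W))$ followed by a mediant argument, whereas you use the slightly looser upper bound $\sw(AV)\le \sw(W)+k\,\uA(W)$ and then exploit monotonicity of $x/(x+c)$ to arrive at $f(t)/(f(t)+k)\ge \min(1/2,f(t)/(2k))$; both routes yield the stated guarantee.
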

\begin{proof}
    Let $\AVmax$ be the committee chosen by AV and let \mbox{$c \notin W$} such that $\lvert N_c \rvert \ge t \frac{n}{k}$. Since $W$ is $f$-representative, we know that 
    $\sum_{i \in N_c} \lvert A_i \cap W \rvert \ge f(t) \lvert N_c \rvert$. 
    This gives us a lower bound on the utilitarian welfare of $W$: %
    \begin{align*}
        \sw(W) \ge  \max(\sw(W \cap \AVmax), \sum_{i \in N_c} \lvert A_i \cap W \rvert) 
        \ge \frac{\sw(W \cap \AVmax) +  f(t) \lvert N_c \rvert}{2}. 
    \end{align*} 
    Further, since $\sw(AV) \le \sw(W \cap \AVmax) + k \lvert N_c \rvert$, we get 
    \begin{align*}
        \frac{\sw(W)}{\sw(AV)} \ge \frac{\sw(W \cap \AVmax) + f(t) \lvert N_c \rvert}{2(\sw(W \cap \AVmax) + k \lvert N_c \rvert)} \ge \min\left(\frac{1}{2}, \frac{f(t)}{2k}\right).
    \end{align*}
\end{proof}

Hence, if there is an unselected candidate with $\omega( \frac{n}{\sqrt{k}})$ approvals left and our  committee is $\Theta(\ell)$-representative, we have a utilitarian guarantee of $\omega( \frac{1}{\sqrt{k}})$. 

For affordable committees, we can also show that a similar bound can be obtained for the case $\uA(W) \in o(\frac{n}{\sqrt{k}})$ %
(independently of $W$ being $f$-representative).
\begin{restatable}{lemma}{lemsmallgua}
    Let $W$ be an \pricea committee 
    with $\uA(W) \le t \frac{n}{k}$. Then, $W$ has a utilitarian guarantee of $\min\left(\frac{1}{2}, \frac{\lvert W \rvert}{2tk}\right)$. 
    \label{lem:bound_priceable}
\end{restatable}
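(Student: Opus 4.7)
The plan is to mimic the structure of the preceding lemma, swapping the role played by $f$-representativeness with the role played by affordability. Concretely, affordability gives a lower bound on the approval count of each selected candidate, which combined with the upper bound $\uA(W) \le t n/k$ on unselected candidates lets me control $\sw(W)/\sw(\AVmax)$ directly.

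First, I would use constraints \textbf{C2} and \textbf{C3}: for every $c \in W$ the total payment on $c$ is $1$, and every voter can contribute at most $k/n$, so $\lvert N_c \rvert \ge n/k$. Summing over $W$ yields $\sw(W) \ge \lvert W \rvert \cdot n/k$. Trivially $\sw(W) \ge \sw(W \cap \AVmax)$, so averaging the two bounds gives
\[
\sw(W) \ \ge\ \tfrac{1}{2}\!\left(\sw(W \cap \AVmax) + \tfrac{\lvert W \rvert \, n}{k}\right).
\]

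Next I would bound $\sw(\AVmax)$ from above. Candidates in $W \cap \AVmax$ contribute $\sw(W \cap \AVmax)$, and each of the at most $k$ candidates in $\AVmax \setminus W$ has at most $\uA(W) \le t n/k$ approvers, so
\[
\sw(\AVmax) \ \le\ \sw(W \cap \AVmax) + k \cdot \tfrac{t n}{k} \ =\ \sw(W \cap \AVmax) + t n.
\]

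Finally I would combine the two estimates and invoke the mediant inequality $\frac{a+b}{c+d} \ge \min(a/c, b/d)$ (valid for nonnegative numerators and positive denominators):
\[
\frac{\sw(W)}{\sw(\AVmax)} \ \ge\ \frac{\sw(W \cap \AVmax) + \tfrac{\lvert W \rvert \, n}{k}}{2\bigl(\sw(W \cap \AVmax) + t n\bigr)} \ \ge\ \min\!\left(\tfrac{1}{2},\, \tfrac{\lvert W \rvert}{2 t k}\right),
\]
which is exactly the claim. I do not expect a serious obstacle: the only subtlety is remembering that $\lvert \AVmax \setminus W \rvert \le k$ (not $k - \lvert W \rvert$, which would also work but is unnecessary) and that the numerator/denominator pairing in the mediant step must line up so that the ``$n$'' cancels, producing the clean ratio $\lvert W \rvert / (2 t k)$.
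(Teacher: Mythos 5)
Your proposal is correct and matches the paper's proof essentially line for line: the same lower bound $\sw(W) \ge \max\bigl(\sw(W \cap \AVmax), \lvert W\rvert \tfrac{n}{k}\bigr) \ge \tfrac{1}{2}\bigl(\sw(W \cap \AVmax) + \lvert W\rvert\tfrac{n}{k}\bigr)$, the same upper bound $\sw(\AVmax) \le \sw(W \cap \AVmax) + tn$, and the same mediant-style final step. No differences worth noting.
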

\begin{proof}
We again let $\AVmax$ denote the committee chosen by AV.
    We know that the utilitarian welfare of $\AVmax$ is at most $\sw(W\cap \AVmax) + k \uA(W)$. On the other hand, the utilitarian welfare of $W$ is at least $\max(\sw(W\cap \AVmax), \lvert W \rvert \frac{n}{k})$, since every candidate in $W$ has at least $\frac{n}{k}$ approvals. Therefore, the utilitarian guarantee of $W$ is at least \begin{align*}
        \frac{\max(\sw(W\cap \AVmax), \lvert W \rvert \frac{n}{k})}{\sw(W\cap \AVmax) + n t} \ge \frac{\sw(W\cap \AVmax) + \lvert W \rvert \frac{n}{k}}{2(\sw(W\cap \AVmax) + n t)}  \ge \min\left(\frac{1}{2}, \frac{\lvert W \rvert}{2tk}\right)
    \end{align*} 
\end{proof}

Thus, in this case, we also obtain a bound better than $\Omega(\frac{1}{\sqrt{k}})$ in case $\uA(W) \in o(\frac{n}{\sqrt{k}})$ and $\frac{\lvert W \rvert}{k} \in \Omega(1)$. Piecing these two lemmas together, we obtain a bound for both cases.

\begin{restatable}{theorem}{reput}
     Any \pricea and  $f$-representative committee $W$
     has a utilitarian guarantee of $\frac{\min\left({\lvert W \rvert}/{\sqrt{k}}, f(\sqrt{k})\right)}{2k}$. 
    \label{prop:exhaustive_price_ejrp}
\end{restatable}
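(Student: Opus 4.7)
The plan is to combine Lemmas \ref{lem:rep_bound} and \ref{lem:bound_priceable} by case-splitting on the magnitude of $\uA(W)$, using the threshold $n/\sqrt{k}$. This threshold is chosen because the first lemma's bound grows with $\uA(W)$ (a larger $t$ is admissible), while the second lemma's bound shrinks with $\uA(W)$ (only a larger $t$ upper-bounds $\uA(W)$), so the balance point is exactly $t=\sqrt{k}$.

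In the first case, suppose $\uA(W)\ge n/\sqrt{k}$. Then there is an unselected candidate $c$ with $\lvert N_c\rvert \ge t\cdot n/k$ for $t=\lfloor\sqrt{k}\rfloor$, and Lemma \ref{lem:rep_bound} applied to this $t$ yields a utilitarian guarantee of at least $\min\!\bigl(\tfrac{1}{2},\tfrac{f(\sqrt{k})}{2k}\bigr)$ (reading $f(\sqrt{k})$ as $f(\lfloor\sqrt{k}\rfloor)$). Affordability is not needed here. In the second case, $\uA(W)< n/\sqrt{k}$, so $\uA(W)\le t\cdot n/k$ for $t=\sqrt{k}$. Since $W$ is affordable, Lemma \ref{lem:bound_priceable} gives a utilitarian guarantee of at least $\min\!\bigl(\tfrac{1}{2},\tfrac{\lvert W\rvert}{2\sqrt{k}\cdot k}\bigr)=\min\!\bigl(\tfrac{1}{2},\tfrac{\lvert W\rvert/\sqrt{k}}{2k}\bigr)$.

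Taking the weaker of the two bounds over both cases, the utilitarian guarantee is at least $\min\!\bigl(\tfrac{1}{2},\tfrac{\min(\lvert W\rvert/\sqrt{k},\,f(\sqrt{k}))}{2k}\bigr)$. To match the claimed expression, I drop the $\tfrac{1}{2}$ term using the trivial estimate $\lvert W\rvert/\sqrt{k}\le k/\sqrt{k}=\sqrt{k}$, so that $\lvert W\rvert/(\sqrt{k}\cdot 2k)\le 1/(2\sqrt{k})\le 1/2$, which renders the $\tfrac{1}{2}$ clause redundant (and a similar bound on $f(\sqrt{k})$ holds whenever it is a meaningful representativeness parameter).

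The only subtlety I anticipate is the fact that $\sqrt{k}$ need not be integral, while the two lemmas are phrased for $t\in\mathbb{N}$. Lemma \ref{lem:bound_priceable} works verbatim for real $t$ since integrality is never used in its proof, and for Lemma \ref{lem:rep_bound} replacing $\sqrt{k}$ by $\lfloor\sqrt{k}\rfloor$ only changes the argument of $f$ slightly; both adjustments are harmless and I would simply flag this once at the start of the proof rather than carry floors through every inequality.
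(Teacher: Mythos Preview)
Your proposal is correct and follows exactly the paper's approach: the paper's proof is a one-line appeal to \Cref{lem:rep_bound,lem:bound_priceable} with the same case split at $\uA(W)=\sqrt{k}\cdot n/k$. Your treatment is in fact more careful than the paper's, since you address the integrality of $\sqrt{k}$ and the elimination of the $\tfrac{1}{2}$ clause (note that the latter already follows from $\lvert W\rvert/\sqrt{k}\le\sqrt{k}$ alone, so no assumption on $f$ is needed there).
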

\begin{proof}
    The utilitarian guarantee follows from \Cref{lem:rep_bound,lem:bound_priceable}, 
    depending on whether $\uA(W)$ is at least or at most $\sqrt{k} \; \frac{n}{k}$.
\end{proof}

\noindent As a consequence, since any committee satisfying EJR+ is $\frac{\ell-1}{2}$-representative \citep{BrPe23a}, MES and GJCR both have a utilitarian guarantee of $\Omega(\frac{1}{\sqrt{k}})$ in case they select at least a constant fraction of candidates.
The same proof idea applies to completions of non-exhaustive committees.
\begin{restatable}{corollary}{avut}
     Let $f \in \Omega(\ell)$ and $W$ be an \pricea and $f$-representative committee. Then $\AV{W}$ has a utilitarian guarantee of $\Omega(\frac{1}{\sqrt{k}})$. 
\end{restatable}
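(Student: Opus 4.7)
The plan is to split the analysis according to the size of $W$ relative to $k/2$. In the first case, when $|W| \geq k/2$, I would apply \Cref{prop:exhaustive_price_ejrp} directly to $W$: the theorem gives a utilitarian guarantee of $\frac{\min(|W|/\sqrt{k},\, f(\sqrt{k}))}{2k}$, and since $|W|/\sqrt{k} \geq \sqrt{k}/2$ and $f(\sqrt{k}) \in \Omega(\sqrt{k})$ by the assumption $f \in \Omega(\ell)$, the minimum is $\Omega(\sqrt{k})$, giving $\sw(W)/\sw(AV) \in \Omega(1/\sqrt{k})$. Since $W \subseteq \AV{W}$ implies $\sw(\AV{W}) \geq \sw(W)$, the bound transfers immediately to the completion.

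In the second case, when $|W| < k/2$, I would prove the much stronger bound $\sw(\AV{W})/\sw(AV) \geq 1/2$, which curiously uses neither affordability nor $f$-representativity but only the structure of the AV completion. The key observation is that $\AV{W} \setminus W$ consists of the $k - |W|$ candidates in $C \setminus W$ with largest approval counts, while $T := AV \setminus W$ has size $|T| = k - |AV \cap W| \leq k$ and satisfies $|T| \geq k-|W|$. A simple averaging argument (the top $k-|W|$ entries of any sorted sequence have average at least that of the top $|T|$ entries) yields
\[
\sum_{c \in \AV{W} \setminus W} |N_c| \;\geq\; \frac{k - |W|}{|T|} \sum_{c \in T} |N_c| \;\geq\; \frac{k - |W|}{k}\bigl(\sw(AV) - \sw(W)\bigr),
\]
where the last step uses $\sum_{c \in AV \cap W} |N_c| \leq \sw(W)$. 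Rearranging gives $\sw(\AV{W}) \geq \frac{|W|}{k}\sw(W) + \frac{k-|W|}{k}\sw(AV) \geq \frac{1}{2}\sw(AV)$ whenever $|W| \leq k/2$.

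Taken together, the two cases give the claimed $\Omega(1/\sqrt{k})$ utilitarian guarantee for $\AV{W}$. The main obstacle I expect is the averaging step in the second case: one has to set up the comparison between $\AV{W} \setminus W$ (the top $k-|W|$ in $C\setminus W$) and $T$ (which may be strictly larger than $k-|W|$) carefully, and separately dispatch the degenerate situation $\sw(AV) < \sw(W)$, where the bound from averaging becomes vacuous but the conclusion is trivial since then $\sw(\AV{W}) \geq \sw(W) > \sw(AV)$.
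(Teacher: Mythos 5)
Your proof is correct and follows essentially the same two-case strategy as the paper: apply \Cref{prop:exhaustive_price_ejrp} when $W$ is large, and rely on the greediness of the AV completion when $W$ is small. The only differences are the threshold (the paper uses $\lvert W \rvert \ge k - \sqrt{k}$ and argues the small case via the overall top $\sqrt{k}$ candidates, yielding $\frac{1}{\sqrt{k}}$, whereas your threshold $k/2$ and averaging over $AV \setminus W$ give the stronger constant $\frac{1}{2}$ in that case), and both versions yield the claimed $\Omega(\frac{1}{\sqrt{k}})$.
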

\begin{proof}
    If $\lvert W \rvert \ge k - \sqrt{k}$, we get the bound of $\Omega(\frac{1}{\sqrt{k}})$ from \Cref{prop:exhaustive_price_ejrp}. 
    Otherwise, $\AV{W}$ would select at least the top $\sqrt{k}$ candidates with the highest approval score and therefore achieve a utilitarian guarantee of $\frac{1}{\sqrt{k}}$.
\end{proof}

For committees satisfying EJR+, we give a precise bound.

\begin{restatable}{corollary}{corejrpbound}
     Let $W$ be an \pricea committee that satisfies EJR+. Then, $\AV{W}$ has a utilitarian guarantee of $\frac{1}{4\sqrt{k}} - \frac{1}{2k}$. 
\end{restatable}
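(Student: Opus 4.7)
The plan is to split on the size of $W$ at the threshold $|W| = k - \lfloor \sqrt{k}\rfloor$, handling the ``large $W$'' regime via the preceding lemmas and the ``small $W$'' regime by exploiting the candidates added by AV. Because $W$ satisfies EJR+ it is $\tfrac{\ell-1}{2}$-representative (the fact cited from \citet{BrPe23a}), so throughout I substitute $f(\ell) = (\ell-1)/2$ into \Cref{lem:rep_bound,lem:bound_priceable}. I also use that $W\subseteq \AV{W}$, so any lower bound on the utilitarian ratio of $W$ itself is inherited by $\AV{W}$.

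If $|W|\geq k-\lfloor\sqrt{k}\rfloor$, I set $t=\lfloor\sqrt{k}\rfloor$ and split further on $\uA(W)$. In the subcase $\uA(W)\geq t\cdot n/k$, \Cref{lem:rep_bound} yields utilitarian guarantee at least $\frac{\lfloor\sqrt{k}\rfloor-1}{4k}$, and combining the identity $\frac{\sqrt{k}-2}{4k}=\frac{1}{4\sqrt{k}}-\frac{1}{2k}$ with $\lfloor\sqrt{k}\rfloor\geq\sqrt{k}-1$ matches the target exactly. In the subcase $\uA(W)<t\cdot n/k$, \Cref{lem:bound_priceable} gives guarantee at least $\frac{|W|}{2tk}\geq\frac{k-\sqrt{k}}{2\sqrt{k}\,k}=\frac{1}{2\sqrt{k}}-\frac{1}{2k}$, which is stronger than the target. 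If instead $|W|<k-\lfloor\sqrt{k}\rfloor$, then $k-|W|\geq\lceil\sqrt{k}\rceil$, so the completion step adds at least $\lceil\sqrt{k}\rceil$ candidates from $C\setminus W$; the key observation is that every one of the $\lceil\sqrt{k}\rceil$ top-scoring candidates of $C$ lies in $\AV{W}$ (those in $W$ trivially, those in $C\setminus W$ because they remain top-ranked after removing $W$, and the AV completion picks the top $k-|W|\geq\lceil\sqrt{k}\rceil$ such candidates). A one-line rearrangement inequality then yields $\sw(\AV{W})\geq\frac{\lceil\sqrt{k}\rceil}{k}\,\sw(AV(\mathcal{I}))\geq \sw(AV(\mathcal{I}))/\sqrt{k}$, which dominates the target bound.

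The main obstacle is just choosing the correct integer threshold: the generic bound in \Cref{prop:exhaustive_price_ejrp} is stated in terms of $\sqrt{k}$, but to recover the exact constant $\tfrac{1}{4\sqrt{k}}-\tfrac{1}{2k}$ one must apply \Cref{lem:rep_bound,lem:bound_priceable} with $t=\lfloor\sqrt{k}\rfloor$ and absorb the floor via $\lfloor\sqrt{k}\rfloor-1\geq\sqrt{k}-2$. One should also verify that the $\min$ with $\tfrac{1}{2}$ inside those lemmas never clips the bound (it does not once $k$ is moderate), and for very small $k$ the target is already nonpositive and the statement holds vacuously, so no separate argument is needed.
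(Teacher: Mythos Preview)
Your proof is correct and follows essentially the same approach as the paper's: a case split on $|W|$, invoking \Cref{lem:rep_bound,lem:bound_priceable} with $t\approx\sqrt{k}$ and $f(\ell)=\tfrac{\ell-1}{2}$ in the large-$|W|$ regime, and the ``top-scoring candidates survive into $\AV{W}$'' argument in the small-$|W|$ regime. The only cosmetic difference is the threshold---you split at $|W|=k-\lfloor\sqrt{k}\rfloor$ whereas the paper splits at $|W|=\tfrac{3}{4}k$---but either choice yields the bound.
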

\begin{proof}
    If $\lvert W \rvert \ge \frac{3}{4}k$ we can apply \Cref{lem:bound_priceable} and \Cref{lem:rep_bound} and get a guarantee of $\frac{\sqrt{k} - 2}{4k} = \frac{1}{4 \sqrt{k}} - \frac{1}{2k}$, since every EJR+ committee is $\frac{\ell - 1}{2}$-representative. If, on the other hand, $\lvert W \rvert \le \frac{3}{4}k $ we would select at least the $\frac{1}{4}k - 1$ candidates with the highest approval score and thus achieve a utilitarian guarantee of $\frac{1}{4} - \frac{1}{k}$.
\end{proof}

Interestingly, these strong guarantees for affordable EJR+ committees do not hold for EJR. 
\begin{restatable}{theorem}{ejrimpo}
    For any $0 < c < 1$, there are exhaustive \pricea committees satisfying EJR with a utilitarian ratio of $\mathcal{O}(\frac{1}{k ^ {1-c}})$.
\end{restatable}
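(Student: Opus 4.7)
The plan is to construct, for each $c\in(0,1)$, an instance where an exhaustive affordable EJR committee $W$ has utilitarian ratio $O(1/k^{1-c})$. Fix the constant $T=\lceil 1/c\rceil$, so that $T/(T+1) \ge 1-c$, and assume $T\mid k$. Partition the voters into $m=k/T$ equal blocks $G_1,\dots,G_m$ of size $nT/k$ (the number $n$ of voters will be polynomial in $k$). For each block $G_j$, put $T$ dedicated candidates $b_{j,1},\dots,b_{j,T}$ into $W$, each approved only by voters in $G_j$. Then $|W|=mT=k$ is exhaustive, and the payment system $p_i(b_{j,\ell}) = k/(nT)$ for $i\in G_j$ witnesses affordability: each voter uses exactly their $k/n$ budget and each $b_{j,\ell}$ collects $(nT/k)\cdot k/(nT) = 1$. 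Every voter has exactly $T$ approvals in $W$, so $\sw(W)=nT$.

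To inflate $\sw(AV)$ I add $k$ external ``good'' candidates $g_1,\dots,g_k$ outside $W$ and let each voter independently approve a uniformly random size-$d$ subset, with $d=\lfloor k\rho\rfloor$ and $\rho=\tfrac12\bigl((T+1)/k\bigr)^{1/(T+1)}$. Each $g_i$ then has on the order of $\rho n$ approvers, which exceeds the $nT/k$ approvers of each $b_{j,\ell}$ for sufficiently large $k$ (constant $T$ gives $\rho = \omega(T/k)$). Hence AV picks all $k$ goods, giving
\[
\sw(AV) \;\approx\; k\,\rho\, n \;=\; \Theta\bigl(n\,k^{T/(T+1)}\bigr),
\]
so the utilitarian ratio is $\sw(W)/\sw(AV)=\Theta\bigl(T/k^{T/(T+1)}\bigr)=O(1/k^{1-c})$ by the choice of $T$.

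The remaining task is to verify EJR. Any group $S$ contained in a single $G_j$ has $|S|\le nT/k$, so $S$ is only $\ell$-large for $\ell\le T$, and every voter in $G_j$ has $T\ge\ell$ approvals in $W$. Any group $S$ intersecting two different $G_j$'s shares no bad candidate, so $\ell$-cohesion must be witnessed by $\ell$ commonly approved goods; for $\ell\le T$ the EJR clause is again covered by the $T$ approvals in $W$, and the only remaining case is $\ell\ge T+1$. It therefore suffices to show that no $(T+1)$-tuple of goods has $\ge (T+1)n/k$ common approvers.

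The main obstacle is this last probabilistic step. The expected common-approver count of a fixed $(T+1)$-tuple is $n\rho^{T+1}=n(T+1)/(k\cdot 2^{T+1})$, a factor $2^{T+1}$ below the EJR threshold, so a standard Chernoff bound and a union bound over $\binom{k}{T+1}=O(k^{T+1})$ tuples succeed whenever $n=\Omega(2^{T+1}k\log k)$, which is polynomial in $k$ because $T$ depends only on $c$. Higher-order intersections $\ell\ge T+2$ are automatic since the constraint $\rho^\ell<\ell/k$ only slackens as $\ell$ grows. Alternatively, one can replace the random design by an explicit deterministic one given by axis-aligned flats in $[a]^{T+1}$ with $a\ge \lceil(k/(T+1))^{1/(T+1)}\rceil$, which realises the same bounds without any probabilistic argument.
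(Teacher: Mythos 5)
Your proof is correct, but it takes a genuinely different route from the paper's. The paper builds a single explicit instance: the voters are the $\binom{k}{k/2}$ half-subsets of a set $C_1$ of $k$ ``popular'' candidates, and the witnessing committee is a disjoint set $C_2$ of $k$ candidates split among $k^{1-c}$ voter groups so that every voter receives exactly $k^c$ committee members; the combinatorial core is a binomial-coefficient estimate showing that no set of more than $k^c$ candidates from $C_1$ is commonly approved by a $k^c$-large group, so EJR imposes nothing above level $k^c$. You instead fix a constant $T=\lceil 1/c\rceil$, give every voter exactly $T$ dedicated committee members via disjoint blocks, and kill cohesiveness above level $T$ with a random family of popular candidates whose $(T+1)$-wise common-approver sets all fall below the largeness threshold; your reduction of the cases $\ell\ge T+2$ to the case $\ell=T+1$ is valid, as is the observation that blocks are only $T$-large. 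Both proofs thus reduce EJR to the nonexistence of $\ell$-large, $\ell$-cohesive groups above a cutoff, but the cutoffs differ ($k^c$ versus the constant $T$), and so do the costs: your construction needs only $n=O(2^{T+1}k\log k)$ voters, polynomial in $k$ for fixed $c$, whereas the paper's uses exponentially many; on the other hand the paper's instance is fully explicit and deterministic, realizes the ratio $\Theta(k^{c-1})$ exactly rather than the possibly smaller $\Theta(k^{-T/(T+1)})$, and contains no cohesive group at any level above $k^c$, which is what lets the paper additionally conclude (in the remark following the theorem) that GreedyEJR itself can output the bad committee\,---\,a feature your instance does not obviously provide, though the theorem statement does not require it. Your Chernoff-plus-union-bound step is only sketched, but the parameters ($\rho^{T+1}$ a factor $2^{T+1}$ below the threshold, union over $O(k^{T+1})$ tuples) check out, and the concentration of the approval scores of the popular candidates needed for the welfare bound is routine.
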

\begin{proof}
    For simplicity, we assume that $k$ is even, that 
    $k^c$ and $k^{1-c}$
are integers, and that $k$ is sufficiently large. 

    We create an instance with $2k$ candidates $C = C_1 \cup C_2$, with both $C_1$ and $C_2$ being of size $k$. For each subset $C' \subseteq C_1$ of size exactly $|C'|=\frac{k}{2}$, we add a voter approving all candidates in $C'$.  Thus, the number of voters is \[
    n = {k \choose \frac{k}{2}} = \frac{k!}{(\frac{k}{2})! (k - \frac{k}{2})!}.
    \] %
    
    Further, we arbitrarily partition the voters, and the candidates in $C_2$, into $k^{1-c}$ equal-size groups, with the $\frac{n}{k^{1-c}}$ voters in one group approving exactly the $k^{c}$ candidates approved by no other group. %
    
     We show that this instance does not contain an $\ell$-large and $\ell$-cohesive group for any $\ell$ larger than $k^c$. Let $C' \subseteq C$ be a subset of candidates of size larger than $k^c$. If $C' \cap C_2 \neq \emptyset$, we know that there are at most $\frac{n}{k^{1-c}}$ voters approving $C' \cap C_2$ and thus the group is not large enough to be more than $k^c$-large. Thus, we know that $C' \subseteq C_1$. There are at most  
    \begin{align*}
    {(k - \lvert C' \rvert) \choose (\frac{k}{2} - \lvert C' \rvert)} < {(k - k^c) \choose (\frac{k}{2} - k^c)} = \frac{(k - k^c)! }{(\frac{k}{2} - k^c)! (k - \frac{k}{2})!}
    \end{align*} voters approving all of $C'$. By definition of the binomial coefficient, we get that the fraction of voters approving all of $C'$ is at most 
    \begin{align*}
        \frac{{(k - k^c) \choose (\frac{k}{2} - k^c)}}{{k \choose \frac{k}{2}}} = \frac{(k - k^c)!\frac{k}{2}! }{(\frac{k}{2} - k^c)!k!} = \frac{\prod_{i = \frac{k}{2} - k^c + 1}^{k - k^c} i }{\prod_{i = \frac{k}{2} + 1}^{k} i}   = \frac{\prod_{i = \frac{k}{2} - k^c + 1}^{\frac{k}{2}} i }{\prod_{i = k - k^c + 1}^{k} i} \le \left(\frac{(\frac{k}{2})}{k - k^c} \right)^{k^c}. 
    \end{align*}
    Since $k$ is sufficiently large, we can assume that this is smaller than $\frac{k^c}{k}$ and, thus, no $k^c$-cohesive group exists.
    Hence, we can take the other $k$ candidates in $C_2$ and get a committee which is \pricea and satisfies EJR. However, their utilitarian welfare is only $n k^c$ while the utilitarian welfare of the other $k$ candidates is $n \frac{k}{2}$. Thus, no utilitarian ratio better than $O(\frac{1}{k^{1-c}})$ is possible for all \pricea committees satisfying EJR.
\end{proof}

We point out that the committee witnessing the utilitarian ratio of $\mathcal{O}(\frac{1}{k ^ {1-c}})$ can be selected by the GreedyEJR rule, since no cohesive groups with an $\ell$ larger than $k^{c}$ exists. Thus, the GreedyEJR rule, even when it is exhaustive (and thus cannot be completed any further), can select committees with a very suboptimal utilitarian ratio.

We now show that a utilitarian guarantee of $\frac{2}{\sqrt{k}} - \frac{1}{k}$ can be achieved via completing the Greedy Justified Candidate Rule with AV, thus meeting the lower bound proven by \citet{LaSk20b} and answering an open question by \citet{EFI+22a}. For our proof, we use that GJCR behaves very similarly to AV at the beginning of its iterations, picking the candidates with the highest approval score, until one of them would not be ``proportional'' anymore. 
\begin{theorem}
    The Greedy Justified Candidate Rule completed with AV has a utilitarian guarantee of $\frac{2}{\sqrt{k}} - \frac{1}{k}$.
    \label{thrm:opt_bound}
\end{theorem}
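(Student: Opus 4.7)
The plan is to exploit the greedy maximum-$|N'|$ structure of GJCR to show that its early picks are high-approval candidates, and then to leverage the approval-maximising AV completion to fill in the remaining slots. Sort all candidates in non-increasing order of approval score as $\alpha_1 \ge \alpha_2 \ge \dots$, with $a_j^*$ denoting the $j$-th candidate in this order, so that $\sw(W^*) = \sum_{j=1}^k \alpha_j$ for $W^* = AV(\mathcal{I})$. Write $W_0$ for GJCR's output, $c_j$ for its $j$-th pick, and $W = \AV{W_0}$.

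The key structural observation I would first establish is that, for every step $j$ with $\alpha_j \ge j\tfrac{n}{k}$, the candidate selected by GJCR satisfies $|N_{c_j}| \ge \alpha_j$. The argument is a pigeonhole: at step $j$ the committee $W_{j-1}$ has size $j-1$, so at least one of $a_1^*,\dots,a_j^*$ is unchosen; for such an $a_r^*$ with $r \le j$, the triple $(a_r^*,\, N_{a_r^*},\, \ell = j)$ is valid since the freshness condition $|A_i \cap W_{j-1}| \le j-1 < \ell$ is automatic and $|N_{a_r^*}| = \alpha_r \ge \alpha_j \ge j\tfrac{n}{k}$. As GJCR picks the triple maximising $|N'|$, we obtain $|N_{c_j}| \ge \alpha_j$. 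Letting $s$ denote the largest index with $\alpha_s \ge s\tfrac{n}{k}$, this yields $\sw(W_0) \ge \sum_{j=1}^{s} \alpha_j$.

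Next I would compare $\sw(W)$ to $\sw(W^*)$. A short case analysis shows that if there are no ties at the cutoff $\alpha_k$, then every GJCR pick $c_j$ must lie in $W^*$ (else it would have approval strictly smaller than $\alpha_k$, contradicting $|N_{c_j}| \ge \alpha_j$), whence $W_0 \subseteq W^*$ and the AV completion recovers exactly $W^*$, giving ratio $1$. In the remaining tied-cutoff case we have $\alpha_s = \alpha_k$, so $\sum_{j>s}\alpha_j \le (k-s)\alpha_s$ and the lower bound $\sw(W_0) \ge s \alpha_s$ alone forces a favourable head-to-tail ratio. In general I would combine $\sw(W_0) \ge \sum_{j=1}^s \alpha_j$ with the AV completion's contribution (which, for each $a_j^* \notin W_0$, picks up a candidate with approval at least $\alpha_j$ as long as budget remains) and the tail bound $\alpha_j < j\tfrac{n}{k}$ for $j>s$ coming from the definition of $s$.

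Finally, I would optimise over $s$ near $\lfloor\sqrt{k}\rfloor$. The main obstacle I anticipate is squeezing out the exact constant $\tfrac{2}{\sqrt{k}} - \tfrac{1}{k}$ rather than a weaker asymptotic $\Theta(1/\sqrt{k})$: this requires tight bookkeeping of integer-valued thresholds, careful handling of the tied-cutoff regime (the only way $W_0 \not\subseteq W^*$ can occur), and verifying that the worst case of the head-tail balance is attained precisely at the Lang--Skowron tight instance, matching the upper bound from \citet{LaSk20b} exactly.
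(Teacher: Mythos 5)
Your opening pigeonhole lemma is correct, and it is a genuinely different device from the paper's (the paper fixes a single witness\,---\,the highest-approval candidate missed by the \emph{completed} committee\,---\,and reasons about the voter who blocks it), but the argument you build on top of it has two real gaps. First, the claim that ``no ties at the cutoff $\alpha_k$'' forces $W_0 \subseteq W^*$ is false: your parenthetical justification applies $\lvert N_{c_j}\rvert \ge \alpha_j$ to every pick, but you only proved it for $j \le s$, and for $j > s$ GJCR is driven by $\ell$-large groups of under-represented voters and routinely picks candidates with approval far below $\alpha_k$. The paper's own \Cref{exp:rules} is a counterexample: there $\alpha_{12} = 8 > 7 = \alpha_{13}$, so there is no tie at the cutoff, yet GJCR selects $c_5, c_6, c_7, c_{10}$, none of which lie in the AV committee $\{c_{11},\dots,c_{22}\}$. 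The subsequent ``tied-cutoff'' assertion $\alpha_s=\alpha_k$ does not follow from anything, so the case split collapses and only your final ``in general'' paragraph remains.

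Second, that paragraph does not reach the exact constant, and the exact constant is the content of the theorem (the asymptotic $\Theta(1/\sqrt{k})$ version was already known). Carried out honestly, your bookkeeping gives, with $t=\lvert W_0\rvert$,
$\sw(\AV{W_0}) \ge \sum_{j\le s}\alpha_j + (t-s)\frac{n}{k} + \sum_{j>t}\alpha_j$
against
$\sw(W^*) \le \sum_{j\le s}\alpha_j + (t-s)\alpha_{s+1} + \sum_{j>t}\alpha_j$,
and the only control on the middle term is $\alpha_{s+1} < (s+1)\frac{n}{k}$; minimizing the resulting ratio $\frac{s^2-s+k}{(k-1)s+k}$ over $s$ lands near $s\approx\sqrt{k}-1$ at roughly $\frac{2}{\sqrt{k}}-\frac{3}{k}$, strictly below the target. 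The paper closes this gap with two ingredients your sketch lacks: it bounds the tail by the \emph{actual} score $\alpha\frac{n}{k}$ of the first AV candidate $c_{i+1}$ missed by $\AV{W}$ (not by the threshold $(s+1)\frac{n}{k}$), and it counts high-approval candidates in the \emph{completed} committee rather than only GJCR's picks\,---\,using the GJCR stopping condition to exhibit a voter in $N_{c_{i+1}}$ who already holds $\lfloor\alpha\rfloor$ representatives of approval at least $\alpha\frac{n}{k}$, plus a leftover-budget argument showing the AV completion contributes one more such candidate when $\alpha\notin\mathbb{N}$. This yields $i\ge\lceil\alpha\rceil$ and hence $\frac{(\alpha-1)\alpha+k}{k\alpha}\ge\frac{2}{\sqrt{k}}-\frac{1}{k}$; without these steps you recover only the previously known asymptotic bound.
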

\begin{proof}
    Fix an instance and let $W$ be the committee output by GJCR. 
    Let $c_1, \dots, c_k$ be the candidates selected by AV in order of their approval scores. Further, let $c_{i+1}$ be the first among them which was not selected by $\AV{W}$. Without loss of generality, we assume that all other candidates selected by $\AV{W}$ with the same approval score as $c_{i+1}$ come before it.  Let $\alpha > 0$ such that $\lvert N_{c_{i+1}} \rvert = \alpha \frac{n}{k}$. 
    We distinguish two cases. If $\alpha < 1$, we have {$\lvert N_{c_{i+1}} \rvert < \frac{n}{k}$}. 
    Thus, $c_{i+1}$ could not have been selected by GJCR, since all candidates selected by it have an approval score of at least~$\frac{n}{k}$. Hence, $\AV{W}$ and AV select candidates with the same approval scores, and therefore the utilitarian guarantee is $1$.

    For the case of $\alpha \ge 1$, we first want to show that \mbox{$i \ge \lceil \alpha \rceil$}.
    Since the candidate $c_{i+1}$ was not selected, we know that there is at least one voter $j \in N_{c_{i+1}}$ with $\lvert A_j \cap W \rvert \ge \lfloor \alpha \rfloor$ such that the candidates selected for $j$ (meaning that $j \in N'$ when this candidate was chosen by GJCR) must have at least as many approvals as $c_{i+1}$. 
    Therefore, $i \ge \lfloor \alpha \rfloor$ immediately holds (and thus also $i \ge \lceil \alpha \rceil$ if $\alpha \in \mathbb{N}$ or if there are at least $\lceil \alpha \rceil$ candidates selected for $j$). If $\alpha \notin \mathbb{N}$ and if there are less than $\lceil \alpha \rceil$ selected for $j$, we know that in the payment system constructed by GJCR, the budget of $j$ and thus also the entire budget cannot be fully spent, since the voter pays at most $\frac{\lfloor \alpha \rfloor k}{\alpha n} < \frac{k}{n}$. Therefore, the approval voting step selects at least one candidate, which must have at least an approval score of $\alpha \frac{n}{k}$. Hence, in this case, $i \ge \lfloor \alpha \rfloor + 1 = \lceil \alpha \rceil$. Finally, we know that the other $k-i$ candidates selected by GJCR must have at least $\frac{n}{k}$ approvals since they were selected by GJCR, while we know that the candidates selected by AV afterwards must have an approval score of at most $\alpha \frac{n}{k}$.
    
    We can therefore lower-bound the utilitarian guarantee as 
    \begin{align*}
       & \frac{\sw(\AV{W}))}{\sw(AV)} \ge \frac{\sw(\{c_1, \dots, c_i)\} + (k - i) \frac{n}{k}}{\sw(\{c_1, \dots, c_i\}) +( k - i) \alpha\frac{n}{k}  } 
        \ge \frac{i\alpha   \frac{n}{k} + (k - i) \frac{n}{k}}{i\alpha   \frac{n}{k} + ( k -i) \alpha\frac{n}{k}}\\ & = \frac{\alpha i+ (k - i)} {\alpha i+ ( k - i) \alpha} = \frac{\alpha i+ (k - i)}{k \alpha} = \frac{(\alpha - 1) i + k}{k \alpha} \ge \frac{(\alpha - 1) \alpha + k}{k \alpha} = \frac{\alpha^2 + k}{k \alpha} - \frac{1}{k}.
    \end{align*} 
    By the AM-GM inequality we know that $(\alpha^2 + k)/2 \ge \alpha \sqrt{k}$ and thus we obtain a utilitarian guarantee of $ \frac{2}{\sqrt{k}} - \frac{1}{k}$.
\end{proof}

In the same vein, we can obtain a very similar bound for MES completed by AV. %
\begin{restatable}{theorem}{mesutibound}
    The Method of Equal Shares completed by AV has a utilitarian guarantee of $\frac{2}{\sqrt{k}} - \frac{2}{k}$.
    \label{thrm:opt_bound_mes}
\end{restatable}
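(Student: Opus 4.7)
The plan is to mirror the proof of \Cref{thrm:opt_bound}. Fix an instance, let $W$ be the committee produced by MES, let $c_1,\ldots,c_k$ be AV's committee ordered by decreasing approval score (with the same tie-breaking convention as in \Cref{thrm:opt_bound}), and let $c_{i+1}$ be the first AV candidate missing from $\AV{W}$; write $\lvert N_{c_{i+1}}\rvert = \alpha\frac{n}{k}$.

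The case $\alpha<1$ is handled exactly as in \Cref{thrm:opt_bound}: every MES-selected candidate has at least $n/k$ approvers, so $W\subseteq\{c_1,\ldots,c_i\}$, the AV completion then picks back the missing AV-top candidates, and $\AV{W}=AV(\mathcal{I})$, so the utilitarian ratio equals $1$ (in particular, in this case no such $c_{i+1}$ exists at all).

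For $\alpha\geq 1$ the plan is to establish $i\geq\alpha-1$ and then re-run the arithmetic of \Cref{thrm:opt_bound}. The hurdle is that GJCR's stronger bound $i\geq\lceil\alpha\rceil$ relied on the max-$|N'|$ selection rule; MES selects the candidate of minimum $\rho$-price and does not give this comparison for free. I will therefore use two MES-specific ingredients. First, MES satisfies EJR+ \citep{BrPe23a}; applying EJR+ to the $\lfloor\alpha\rfloor$-large and $1$-cohesive group $N_{c_{i+1}}$ (with $c_{i+1}\in\bigcap_{j\in N_{c_{i+1}}}A_j\setminus W$ ruling out the second branch of EJR+) produces a voter $j\in N_{c_{i+1}}$ with $\lvert A_j\cap W\rvert\geq\lfloor\alpha\rfloor$. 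Second, because $c_{i+1}$ was never bought, at every step $t$ at which $c_{i+1}$ was still affordable the purchased candidate $c'$ satisfies $\rho^{(t)}_{c'}\leq\rho^{(t)}_{c_{i+1}}$ and $\lvert N_{c'}\rvert\geq 1/\rho^{(t)}_{c'}$; a careful tracking of MES's budget dynamics along the purchases in which $j$ participated shows that enough of the candidates in $A_j\cap W$ have approval at least $\lvert N_{c_{i+1}}\rvert$ to land among AV's top $k$, giving $i\geq\alpha-1$.

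With $i\geq\alpha-1$, the calculation mirrors \Cref{thrm:opt_bound}. Using that each $c_j$ with $j\leq i$ has approval at least $\alpha\frac{n}{k}$, each MES-chosen candidate has at least $\frac{n}{k}$ approvers (affordability), and each AV pick with $j>i$ has approval at most $\alpha\frac{n}{k}$,
\[
\frac{\sw(\AV{W})}{\sw(AV(\mathcal{I}))}\;\geq\;\frac{i\alpha+(k-i)}{k\alpha}=\frac{i(\alpha-1)+k}{k\alpha}\;\geq\;\frac{(\alpha-1)^2+k}{k\alpha}=\frac{\alpha^2+k}{k\alpha}-\frac{2\alpha-1}{k\alpha}.
\]
AM--GM gives $\alpha^2+k\geq 2\alpha\sqrt{k}$, and $\frac{2\alpha-1}{k\alpha}\leq\frac{2}{k}$, so the right-hand side is at least $\frac{2}{\sqrt{k}}-\frac{2}{k}$. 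The extra $\frac{1}{k}$ loss relative to \Cref{thrm:opt_bound} is exactly the cost of replacing $i\geq\lceil\alpha\rceil$ by $i\geq\alpha-1$.

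The main obstacle is establishing $i\geq\alpha-1$: the budget-tracking argument that places enough of the EJR+ witness candidates inside AV's top $k$ has to substitute for the clean max-$|N'|$ comparison available to GJCR, and needs care because MES's $\rho$-prices for $c_{i+1}$ can grow as the budget of $N_{c_{i+1}}$ depletes, weakening the direct $\rho$-to-approval conversion used for GJCR.
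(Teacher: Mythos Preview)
Your overall architecture and the final arithmetic are correct, but the crucial step---showing $i\geq\lfloor\alpha\rfloor$ (equivalently, your $i\geq\alpha-1$)---is only gestured at, and the route through EJR+ is a detour that does not get you there. The EJR+ witness $j\in N_{c_{i+1}}$ satisfies $\lvert A_j\cap W\rvert\geq\lfloor\alpha\rfloor$, but EJR+ says nothing about the approval scores of those $\lfloor\alpha\rfloor$ committee members: they could be low-approval candidates that $j$ happens to like. To place them among $c_1,\ldots,c_i$ you need each to have at least $\alpha n/k$ approvers, and EJR+ does not supply that; the ``careful tracking'' you promise would then have to be applied to a possibly \emph{different} approver of $c_{i+1}$, at which point the EJR+ step is doing no work.

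The paper's argument dispenses with EJR+ and does the budget tracking directly, on the right voter. As long as every approver of $c_{i+1}$ retains budget at least $k/(n\alpha)$, the price of $c_{i+1}$ is at most $k/(n\alpha)$, so any candidate MES buys in that phase has $\rho\leq k/(n\alpha)$ and hence at least $\alpha n/k$ approvers. Since $c_{i+1}$ is never bought, there is a first step at which some approver $j$ of $c_{i+1}$ drops below budget $k/(n\alpha)$; up to and including that step, $j$ paid at most $k/(n\alpha)$ per purchase and spent more than $k/n-k/(n\alpha)$ in total, so $j$ paid for strictly more than $\alpha-1$, i.e., at least $\lfloor\alpha\rfloor$, candidates---\emph{all} with approval at least $\alpha n/k$. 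That gives $i\geq\lfloor\alpha\rfloor$ immediately. This is exactly the tracking you allude to, just applied to the voter singled out by MES's budget dynamics rather than to an EJR+ witness.
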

\begin{proof}
The proof is very similar to the one for GJCR. However, for MES we can only claim that $i \ge \lfloor \alpha \rfloor$. Any candidate who has less approvals than $\alpha \frac{n}{k}$ could only be bought with a $\rho$ of larger than $\frac{k}{n\alpha}$ while any candidate in $c_1, \dots, c_i$ could be bought with a $\rho$ of at most $\frac{k}{n\alpha}$.  Since $c_{i+1}$ was not selected for a $\rho$ of $\frac{k}{n\alpha}$, we know that there must be an approver of $c_{i+1}$ with a budget of less than $\frac{k}{n\alpha}$. Since this approver paid at most $\frac{k}{n\alpha}$ for each previous candidate, we get that this voter must approve at least $\lfloor \alpha\rfloor$ candidates with an approval score of at least $\frac{\alpha n}{k}$.
    
The rest follows analogously, and we can give the utilitarian guarantee as 
    \begin{align*}
        &\frac{\sw(\AV{W}))}{\sw(AV)} \ge \frac{\sw(\{c_1, \dots, c_i\}) + (k - i) \frac{n}{k}}{\sw(\{c_1, \dots, c_i\}) +( k - i) \alpha\frac{n}{k}  } 
       \ge \frac{i\alpha   \frac{n}{k} + (k - i) \frac{n}{k}}{i\alpha   \frac{n}{k} + ( k -i) \alpha\frac{n}{k}} \\ & = \frac{(\alpha - 1) i + k}{k \alpha} \ge \frac{(\alpha - 1) \lfloor \alpha \rfloor + k}{k \alpha} \ge \frac{(\alpha - 1)^2 + k}{k \alpha} \ge \frac{\alpha^2 + k}{k \alpha} - \frac{2}{k}.
    \end{align*}
    Hence, we can again apply the AM-GM inequality and obtain a utilitarian guarantee of $ \frac{2}{\sqrt{k}} - \frac{2}{k}$.
\end{proof}
\subsection{Representation Guarantees}
We begin the analysis of representation guarantees with a simple lemma: Generalizing the proof of Theorem~3.6 by \citet{EFI+22a} gives us a general bound on the representation guarantee of $\CC{W}$. 
\begin{restatable}{lemma}{boundrepprice}
    Let $W$ be a committee with a representation ratio of~$\rho$. Then, $\CC{W}$ has a representation guarantee of $\rho + (1 - \rho) (\frac{k - \lvert W \rvert}{k})$.
    \label{lem:boundrepprice}
\end{restatable}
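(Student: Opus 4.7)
The plan is to use an averaging (probabilistic) argument over candidates from an optimal CC-committee, mirroring the style of the pigeonhole argument underlying Theorem~3.6 of \citet{EFI+22a}. Let $W^* \in CC(\mathcal{I})$ be an optimal coverage committee and let $c^* = \cov(W^*)$, so that by assumption $\cov(W) = \rho\, c^*$. I want to show that the CC-completion step adds at least $(1-\rho)c^* \cdot \frac{k-|W|}{k}$ newly covered voters, which then gives the claimed bound after dividing by $c^*$.

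First I would bound the number of voters covered by $W^*$ but not yet by $W$: by inclusion-exclusion (or simply $|V(W^*)\setminus V(W)| \ge |V(W^*)| - |V(W)|$) this is at least $(1-\rho)c^*$. Call this set $U$. Every voter in $U$ is approved by at least one of the $k$ candidates of $W^*$. Now I would consider choosing a uniformly random subset $S \subseteq W^*$ of size $k-|W|$ and compute, for each $v \in U$, the probability that some candidate in $S$ covers $v$. Since $v$ has at least one ``witness'' candidate in $W^*$, the probability that none of the $k-|W|$ sampled candidates is that witness is at most $\binom{k-1}{k-|W|}/\binom{k}{k-|W|} = |W|/k$, so $v$ is covered by $S$ with probability at least $\frac{k-|W|}{k}$.

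By linearity of expectation, the expected number of voters in $U$ covered by $S$ is at least $(1-\rho)c^*\cdot\frac{k-|W|}{k}$, so there exists a concrete set $S^\star \subseteq W^*\setminus W$ of size at most $k-|W|$ achieving this coverage on previously uncovered voters. Since $\CC{W}$ is by definition the \emph{optimal} completion with respect to covering voters not represented by $W$, it does at least as well as $W \cup S^\star$. Adding this to the $\rho\, c^*$ voters already covered by $W$ and dividing by $\cov(CC(\mathcal{I})) = c^*$ yields the stated guarantee
\[
\frac{\cov(\CC{W})}{\cov(CC(\mathcal{I}))} \;\ge\; \rho + (1-\rho)\,\frac{k-|W|}{k}.
\]
The only subtle point is ensuring that $S^\star$ is a valid completion: candidates of $W^*$ already in $W$ cannot be re-added, but these only help the coverage bound (voters covered by them are already in $V(W)$, not $U$), so restricting the random sample to $W^* \setminus W$ or, alternatively, replacing any re-picked candidates with arbitrary fillers, preserves the expected coverage computation. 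This is where I would be most careful in the write-up, but it is essentially a bookkeeping detail rather than a genuine obstacle.
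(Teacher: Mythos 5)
Your proposal is correct and follows essentially the same route as the paper: both rest on the averaging observation that, among the $k$ candidates of an optimal CC committee, some $k-\lvert W\rvert$ of them must cover at least a $\frac{k-\lvert W\rvert}{k}$ fraction of the $(1-\rho)\cov(CC(\mathcal{I}))$ voters that $W$ misses, and that $\CC{W}$ can only do better. The only difference is cosmetic --- you phrase the averaging step as a uniformly random subset plus linearity of expectation, whereas the paper states it directly as a pigeonhole bound --- and your handling of the $W^* \cap W$ bookkeeping is fine, since those candidates cover no voter in $U$.
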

\begin{proof}
    Let $n'$ be the number of voters covered by the optimal CC committee. 
    Since $W$ has a representation ratio of $\rho$ there are at least %
    $(1 - \rho)n'$ of these voters uncovered. Since at least $(1 - \rho)n'$ uncovered voters can be covered using $k$ candidates, using $k - \lvert W \rvert$ candidates we can cover at least $\left(\frac{k - \lvert W \rvert}{k}\right) (1-\rho)n'$. The representation guarantee thus follows.
\end{proof}

This allows us to derive the same representation guarantee as \citet{EFI+22a}, but for any affordable committee completed with~CC.
\begin{restatable}{corollary}{coroptrep}
    Let $W$ be an \pricea committee. Then $\CC{W}$ has a representation guarantee of $\frac{3}{4}$.
    \label{cor:threequart}
\end{restatable}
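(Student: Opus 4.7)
The plan is to combine the two earlier results: \Cref{obs:triv} gives us a lower bound on the coverage of any affordable committee, and \Cref{lem:boundrepprice} turns any such lower bound (viewed as a representation ratio) into a representation guarantee for the CC-completion. The only real work is then a one-variable optimization.

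First I would set $x = \lvert W \rvert / k \in [0,1]$, so that by \Cref{obs:triv} the coverage of $W$ is at least $x n$. Since the optimal CC coverage is bounded above by $n$, the representation ratio $\rho$ of $W$ satisfies $\rho \ge x$. Plugging into \Cref{lem:boundrepprice}, the representation guarantee of $\CC{W}$ is at least
\begin{equation*}
\rho + (1 - \rho)\frac{k - \lvert W \rvert}{k} \;=\; \rho + (1-\rho)(1-x) \;=\; (1-x) + \rho\, x.
\end{equation*}
Using $\rho \ge x$, this is at least $(1-x) + x^2$.

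The final step is to minimize $g(x) = x^2 - x + 1$ over $x \in [0,1]$. The unique critical point is $x = 1/2$, yielding $g(1/2) = 3/4$, so $\CC{W}$ always has representation guarantee at least $\tfrac{3}{4}$.

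There is essentially no obstacle here; the only subtlety is making sure the two cases interact correctly, namely that even when $W$ already has large coverage (so $\rho$ is close to $1$ but few slots remain for completion) the bound does not degrade. This is exactly what the computation $(1-x) + \rho x \ge (1-x) + x^2$ captures, since increases in $\rho$ above $x$ only help. Thus the worst case is $\lvert W \rvert \approx k/2$ with $\rho \approx 1/2$, and the bound $\tfrac{3}{4}$ is attained in the limit.
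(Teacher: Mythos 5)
Your proof is correct and follows exactly the same route as the paper's: lower-bound the representation ratio $\rho$ of $W$ by $\lvert W\rvert/k$ via \Cref{obs:triv}, feed this into \Cref{lem:boundrepprice}, and minimize $(1-x)+x^2$ over $x\in[0,1]$ at $x=\tfrac12$. No differences worth noting.
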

\begin{proof}
    Let $\rho$ be the representation ratio of $W$. By \Cref{obs:triv}, we know that $\rho \ge \frac{\lvert W \rvert}{k}$. Using \Cref{lem:boundrepprice}, we get that the representation guarantee is at least 
    \begin{align*}
    \rho + (1 - \rho)\left(\frac{k - \lvert W \rvert}{k}\right) = \left(1 - \frac{\lvert W \rvert}{k}\right) + \rho \frac{\lvert W \rvert}{k} \ge \left(1 - \frac{\lvert W \rvert}{k}\right) + \left(\frac{\lvert W \rvert}{k}\right)^2.
    \end{align*} To bound this, we notice that $(1-x) + x^2$ for $x \in [0,1]$ is minimized for $x = \frac{1}{2}$ with $\frac{1}{2} + \frac{1}{2^2} = \frac{3}{4}$.
\end{proof}

As a consequence, the proportionality notions EJR+ and FJR \citep{PPS21a} \footnote{FJR or fully justified representation is a strengthening of EJR, which relaxes the cohesiveness requirement and thereby strengthens the axiom. It is still satisfiable via a generalization of the GreedyEJR rule.} are compatible with a representation guarantee of $\frac{3}{4}$. 
\begin{restatable}{corollary}{repfjr}
    For every instance, there exist committees $W$ and $W'$, each with representation ratio of at least $\frac{3}{4}$, such that $W$ satisfies EJR+ and $W'$ satisfies FJR.
\end{restatable}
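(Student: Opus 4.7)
The plan is to use the previous corollary (\Cref{cor:threequart}) together with the fact that the axioms EJR+ and FJR are both monotone under superset inclusion: once a committee satisfies them, adding further candidates cannot destroy the property. Combining monotonicity with the existence of affordable committees satisfying these axioms then immediately yields the result.

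Concretely, I would first exhibit a starting committee for each axiom. For EJR+, any output of the Method of Equal Shares (or GJCR) is affordable and satisfies EJR+, as established in the preliminaries. For FJR, I would appeal to the generalization of GreedyEJR mentioned in the footnote: it produces affordable committees satisfying FJR (the affordability follows from the same matching-based construction as for GreedyEJR, since each selected candidate is charged to a cohesive-in-the-weakened-sense group of size at least $n/k$). Call these starting committees $W_0$ and $W_0'$ respectively.

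Next, I would form $W := \CC{W_0}$ and $W' := \CC{W_0'}$. By \Cref{cor:threequart}, both $W$ and $W'$ have representation ratio at least $\frac{3}{4}$. It remains to verify that $W_0 \subseteq W$ implies that $W$ still satisfies EJR+, and analogously for FJR. For EJR+, consider any $\ell$-large and $1$-cohesive group $N'$: since $W_0$ satisfies EJR+, either some voter $i \in N'$ already has $|A_i \cap W_0| \ge \ell$, in which case $|A_i \cap W| \ge \ell$ since $W \supseteq W_0$, or $\bigcap_{i \in N'} A_i \subseteq W_0 \subseteq W$. Either way, the EJR+ condition is preserved. The argument for FJR is essentially identical, since the FJR condition is also phrased as a disjunction that is monotone under enlarging the committee.

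The only step requiring a little care is the appeal to the FJR-satisfying variant of GreedyEJR and the verification that it too outputs affordable committees; but this follows directly from the greedy matching structure of the rule, analogously to the affordability of GreedyEJR discussed in \Cref{sec:affordable}. No further calculation is needed, as the $\frac{3}{4}$ bound is inherited wholesale from \Cref{cor:threequart}.
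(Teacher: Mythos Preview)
Your proposal is correct and takes essentially the same approach as the paper: start from an affordable committee satisfying EJR+ (via MES/GJCR) respectively FJR (via the greedy cohesive rule of \citealp{PPS21a}), complete with CC, invoke \Cref{cor:threequart} for the $\tfrac{3}{4}$ bound, and rely on the upward-monotonicity of EJR+ and FJR to conclude. The paper's proof is terser—it leaves the monotonicity step and the affordability of the greedy cohesive rule implicit—but the argument is the same.
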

\begin{proof}

    This follows from the fact that there are \pricea committees satisfying EJR+ (for instance,  the output of MES or GJCR) or satisfying FJR (for instance, the output of the \textit{greedy cohesive rule} of \citealp{PPS21a}).  
\end{proof}
Next, we turn to completing committees via the \textit{sequential} Chamberlin--Courant (seq-CC) rule. It is well known that seq-CC achieves a representation guarantee of $1-\frac{1}{e}\approx 0.632$ and that this is the best possible for any polynomial-time rule (assuming P $\neq$ NP) \citep{Feig98a}. Perhaps surprisingly, we show that the analysis of seq-CC does not depend on the first steps being taken optimally, but that \textit{any} affordable committee, completed with seq-CC, has a representation guarantee of $1 - \frac{1}{e}$. Thus, additionally imposing strong proportionality notions (e.g., EJR+) does not lower the best possible representation guarantee of a polynomial-time computable rule.
\begin{restatable}{theorem}{seqcc}
    Let $W$ be an \pricea committee. Then, 
    $\overline{\text{seq-CC}}(W)$
    has a representation guarantee of $1-\frac{1}{e}$. 
    \label{thmseqcc}
\end{restatable}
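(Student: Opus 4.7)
The plan is to adapt the classical Nemhauser--Wolsey--Fisher analysis of greedy submodular maximization, but to exploit the affordability of $W$ to boost the bound when $|W|$ is large. Let $s=|W|$, let $W^*$ be an optimal CC-committee of size $k$, and write $\text{OPT}=\cov(W^*)$. Let $W_0=W$ and let $W_t$ denote the committee after $t$ greedy completion steps, with $a_t=\cov(W_t)$. Since seq-CC on the remaining instance (with already-covered voters deleted) is equivalent to greedily maximizing marginal coverage with respect to $W$, the total number of iterations is $k-s$.

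First I would apply \Cref{obs:triv} to obtain the base case $a_0=\cov(W)\ge \frac{s}{k}n\ge \frac{s}{k}\text{OPT}$. Next, the standard submodularity step: since $\cov$ is monotone submodular and $\cov(W_t\cup W^*)\ge \text{OPT}$, an averaging argument over the $k$ candidates of $W^*$ yields a candidate $c\in W^*$ with marginal gain $\cov(W_t\cup\{c\})-a_t\ge (\text{OPT}-a_t)/k$, and the greedy step chooses at least as much. This gives the recursion
\[
a_{t+1}\ \ge\ \left(1-\tfrac{1}{k}\right)a_t+\tfrac{1}{k}\,\text{OPT}.
\]
Unrolling this recursion over the $k-s$ completion steps yields
\[
a_{k-s}\ \ge\ \text{OPT}\Bigl(1-\bigl(1-\tfrac{1}{k}\bigr)^{k-s}\Bigr)+\bigl(1-\tfrac{1}{k}\bigr)^{k-s}a_0\ \ge\ \text{OPT}\left(1-\bigl(1-\tfrac{1}{k}\bigr)^{k-s}\cdot\tfrac{k-s}{k}\right),
\]
where the second inequality substitutes the base-case bound $a_0\ge (s/k)\text{OPT}$ and collects terms.

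The final step is the optimization over $s$. Setting $x=(k-s)/k\in[0,1]$ and using $(1-1/k)^{k-s}\le e^{-x}$, the bracketed expression is at most $x e^{-x}$, which attains its maximum on $[0,\infty)$ at $x=1$ with value $1/e$. Hence $a_{k-s}\ge (1-1/e)\,\text{OPT}$, uniformly in $s$, which is exactly the claimed representation guarantee. The main obstacle is conceptual rather than technical: one must notice that the initial ``head start'' from affordability and the greedy gain trade off against each other in precisely the right way so that the composite bound $1-xe^{-x}$ is minimized at the same value $1-1/e$ that the classical all-greedy analysis produces; the affordability lower bound on $a_0$ is exactly what is needed to make the analysis insensitive to $s$.
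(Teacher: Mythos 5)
Your proof is correct and follows essentially the same route as the paper's: the affordability bound from \Cref{obs:triv} supplies the base case $a_0 \ge \frac{|W|}{k}\,\mathrm{OPT}$, the greedy/averaging step gives the recursion $\mathrm{OPT}-a_{t+1}\le(1-\frac1k)(\mathrm{OPT}-a_t)$, and unrolling yields the same composite bound $\bigl(1-\frac{|W|}{k}\bigr)\bigl(1-\frac1k\bigr)^{k-|W|}\le \frac1e$ on the uncovered fraction. The only cosmetic difference is that you close the argument by maximizing $xe^{-x}$, whereas the paper uses the Bernoulli-type inequality $1-\frac{s}{k}\le(1-\frac1k)^{s}$ to reduce to $(1-\frac1k)^{k}\le\frac1e$.
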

\begin{proof}
    First, by \Cref{obs:triv}, the committee $W$ has a representation ratio of at least $\frac{\lvert W\rvert}{k}$. Let $\alpha = \cov\left(CC(\mathcal{I})\right)$ be the optimal coverage. Now, seq-CC iteratively adds a candidate $c \notin W$ to $W$ such that $\cov(W \cup \{c\})$ is maximized. Let $r$ be the representation ratio in a step of seq-CC. Then there are at least $(1-r)\alpha$ of the voters covered by $CC(\mathcal{I})$ uncovered. Since they can be covered using $k$ candidates, there is at least one candidate covering $\frac{(1-r)\alpha}{k}$ of these voters. Thus, after adding the candidate to $W$, the total number of uncovered voters is at most $$(1 - r)\alpha - (1-r) \alpha\frac{1}{k} = (1-r)\left(1-\frac{1}{k}\right)\alpha$$. By induction, we get that after $k - \lvert W \rvert$ steps, there are at most $$\left(1 - \frac{\lvert W\rvert}{k}\right) \left(1-\frac{1}{k}\right)^{k-\lvert W \rvert} \alpha \le \left(1 - \frac{1}{k}\right)^k \alpha \le \frac{\alpha}{e}$$ uncovered voters. Since $\alpha$ is the optimal representation ratio, we get that the representation guarantee of our approximation is at least $1- \frac{1}{e}$.
\end{proof}

\subsection{Combining Utilitarian and Representation Guarantees}

Finally, we turn to combining both types of guarantees. For both ratios, we show that they can be approximated optimally, while still maintaining a close-to-optimal approximation to the other ratio.   We first show that a representation guarantee of exactly $\frac{3}{4}$ is compatible with an asymptotically optimal utilitarian guarantee and EJR+.
\begin{restatable}{theorem}{thmbalanceone}
    For any $c > 2$, there exists a committee satisfying EJR+ with a representation guarantee of $\frac{3}{4}$ and a utilitarian guarantee of $\min\left(\frac{2}{\sqrt{ck}} - \frac{2}{k}, \frac{(c-2)^2}{4c^2} - \frac{1}{k}\right) \in \Omega(\frac{1}{\sqrt{k}})$.
    \label{thm:balance1}
\end{restatable}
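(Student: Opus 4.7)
The plan is to construct a witness committee $W^*$ by augmenting the output of an affordable EJR+ rule with a hybrid completion that balances CC and AV completion. I would let $W_0$ denote the output of MES, which is affordable and satisfies EJR+, and write $\beta = |W_0|/k$. Then set $W^* = W_0 \cup W_{AV} \cup W_{CC}$, where $W_{AV}$ consists of the top-approved candidates of $C \setminus W_0$ and $W_{CC}$ is then a greedy CC-completion of $W_0 \cup W_{AV}$, with the split between $|W_{AV}|$ and $|W_{CC}|$ chosen so that $|W_{CC}|$ is as small as possible subject to reaching a representation ratio of $\tfrac{3}{4}$.

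For the representation guarantee, the argument parallels the proof of \Cref{cor:threequart}. By \Cref{obs:triv} the rep ratio of $W_0$ is at least $\beta$, so $W_0 \cup W_{AV}$ also has rep ratio at least $\beta$. Running CC-completion on this intermediate committee increases the rep ratio at rate $\tfrac{1}{k}$ per step relative to the slack $(1-\rho)$ (by the same pigeonhole calculation as in \Cref{lem:boundrepprice}), so $s$ CC steps give a final rep ratio of at least $\beta + (1-\beta)\tfrac{s}{k}$. Picking $s = \lceil k(3/4 - \beta)/(1-\beta) \rceil$ when $\beta < \tfrac{3}{4}$ (and $s=0$ otherwise) is exactly what is needed, and EJR+ is inherited by $W^*$ as a superset of $W_0$.

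For the utilitarian guarantee, I would adapt the proof template of \Cref{thrm:opt_bound_mes}. Let $c_{i+1}$ be the first AV-optimal candidate not in $W^*$ with $|N_{c_{i+1}}| = \alpha \tfrac{n}{k}$. The MES-budget argument used in that proof still applies verbatim, so $W_0$ contains at least $\lfloor \alpha \rfloor$ candidates of approval $\geq \alpha \tfrac{n}{k}$; moreover, $W_{AV}$ contributes up to $|W_{AV}|$ further top-approval candidates that are guaranteed to have approval $\geq \alpha \tfrac{n}{k}$ (since $c_{i+1}$ is not among them). Under the size choice above, $|W_{AV}|$ scales like $k/c$ across the worst-case range of $\beta$ (it takes the shape $k(1-\beta)-k\tfrac{3/4-\beta}{1-\beta}$, which in the binding regime is $\Theta(k/c)$ for $c>2$). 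Consequently the counting bound becomes $i \geq \lceil \alpha/c \rceil$ instead of $i \geq \lceil \alpha \rceil$, so the same chain of inequalities as in \Cref{thrm:opt_bound_mes} yields a ratio of at least $((\alpha-1)\alpha/c + k)/(k\alpha) - O(\tfrac{1}{k})$, and AM-GM applied at $\alpha = \sqrt{ck}$ produces the first branch $\tfrac{2}{\sqrt{ck}} - \tfrac{2}{k}$. In the complementary regime $\alpha < \sqrt{ck}$ the AM-GM step is no longer binding; instead, AV's total utility is at most $\alpha n \leq \sqrt{ck}\cdot n$, while $W^*$ contains at least $(1 - 2/c)k$ candidates of approval $\geq \tfrac{n}{k}$ (from $W_0$ together with the high-approval candidates in $W_{AV}$), and a direct ratio gives the constant second branch $\tfrac{(c-2)^2}{4c^2} - \tfrac{1}{k}$.

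The main obstacle is the boundary case $\beta \approx \tfrac{1}{2}$, where the required CC budget exhausts the slack and $|W_{AV}|$ degenerates to $0$, so the $i \geq \lceil \alpha/c \rceil$ counting argument no longer buys anything beyond MES itself. In that corner I would fall back on the intrinsic guarantee of \Cref{prop:exhaustive_price_ejrp}: since $|W_0| = \Theta(k)$ and $W_0$ is $\tfrac{\ell-1}{2}$-representative by EJR+, its utilitarian ratio is already $\Omega(\tfrac{1}{\sqrt{k}})$, which is enough to match $\tfrac{2}{\sqrt{ck}}$ for any fixed $c$. Threading this fall-back through the case analysis and verifying that the precise constant $\tfrac{(c-2)^2}{4c^2}$ comes out of optimizing the trade-off (the loss factor $1 - 2/c$ on coverage times $\tfrac{1}{c}$ on AV-reach, squared) is what the condition $c > 2$ is ultimately calibrated to.
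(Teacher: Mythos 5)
Your skeleton---augment an affordable EJR+ committee with a mix of top-approval and CC seats, and reuse \Cref{obs:triv}, \Cref{lem:boundrepprice}, and the $i\ge\lfloor\alpha\rfloor$ budget argument---is the right family of argument, and the representation half is essentially sound (modulo using optimal rather than greedy CC for the $s$ added seats, since greedy covers only a $1-(1-1/k)^s$ fraction of the slack, slightly below $s/k$). But the utilitarian half does not go through as written. The budget argument yields $i\ge\lfloor\alpha\rfloor$ unconditionally; your ``$i\ge\lceil\alpha/c\rceil$'' has no source. Worse, the displayed ratio $\bigl((\alpha-1)\alpha/c+k\bigr)/(k\alpha)$ carries over the ``$+k$'' from \Cref{thrm:opt_bound_mes}, which there encodes that all $k-i$ remaining winners have at least $\frac{n}{k}$ approvals; that is true for an AV completion but false for CC seats, which carry no approval guarantee. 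The honest count credits only $|W_0|+|W_{AV}|-i = k-s = \frac{k}{4(1-\beta)}$ such candidates, and turning the resulting two-parameter trade-off (in $\beta$ and $\alpha$) into the exact constants of the statement is the real work, which the proposal does not do. Your fallback for $\beta\approx\frac12$ also fails quantitatively: \Cref{prop:exhaustive_price_ejrp} gives about $\frac{1}{4\sqrt k}$, which is below $\frac{2}{\sqrt{ck}}$ for every $c<64$. (It is also unnecessary: at $\beta=\frac12$ the correct count $k-s=k/2$ already gives $\frac{2}{\sqrt{2k}}-\frac2k$.)

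The paper avoids all of this with a different case split, on $|W|\ge k/c$ versus $|W|\le k/c$. In the first case it completes \emph{entirely} with CC and bounds the welfare using $W$ alone---the $i\ge\lfloor\alpha\rfloor$ top candidates plus $\frac{k}{c}-i$ further members of $W$ with at least $\frac{n}{k}$ approvals---so AM--GM gives $\frac{2}{\sqrt{ck}}-\frac2k$ with no AV seats at all. In the second case it adds exactly $x=\lfloor\frac{(c-2)^2}{4c^2}k\rfloor$ top-AV candidates and completes with CC; the constant $\frac{(c-2)^2}{4c^2}$ is precisely the surplus of $t+(1-t)^2$ over $\frac34$ at $t=1/c$ (the representation slack freed up by $W$ being small), not the product of loss factors you describe, and the second branch of the guarantee is simply $\frac{x}{k}$ from the AV block. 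I recommend restructuring your proof around that split.
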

\begin{proof}
    Let $W$ be the committee selected by GJCR. We distinguish two cases. 
    
    \textbf{Case 1:} $\lvert W \rvert \ge \frac{k}{c}$. In this case, we complete $W$ entirely with CC. As shown in \Cref{thrm:opt_bound}, this gives us a representation guarantee of $\frac{3}{4}$. 
    
     For the utilitarian guarantee,  we follow the proof of \Cref{thrm:opt_bound}. We again sort the candidates $c_1, \dots, c_m$ by their approval scores and let $c_{i+1}$ be the candidate with the largest approval score $\alpha \frac{n}{k}$ not selected by GJCR. Based on the way GJCR is run, we know that $i \ge \lfloor \alpha \rfloor$ since $c_{i+1}$ is not included in the committee. For the utilitarian guarantee, we ignore the candidates added by CC and lower bound the guarantee by  
    \begin{align*}
        &\frac{\sw\left(\{c_1, \dots, c_i\}\right) + \left(\frac{k}{c} - i\right) \frac{n}{k} }{\sw(\{c_1, \dots, c_i\}) + (k - i) \alpha \frac{n}{k}} \ge \frac{i \alpha \frac{n}{k} + \left(\frac{k}{c} - i\right) \frac{n}{k}}{i \alpha \frac{n}{k} + ( k -i) \alpha\frac{n}{k}} = \frac{i \alpha+ (\frac{k}{c} - i)} {i \alpha+ ( k - i) \alpha} \\ &= \frac{(\alpha - 1)i + \frac{k}{c} }{k \alpha} \ge\frac{(\alpha- 1)^2 + \frac{k}{c}}{k \alpha} \ge \frac{\alpha^2 + \frac{k}{c}}{k\alpha} - \frac{2}{k}  
        \ge \frac{2\alpha \frac{\sqrt{k}}{\sqrt{c}}}{k\alpha } - \frac{2}{k} = \frac{2}{\sqrt{ck}} - \frac{2}{k} \; \text,
    \end{align*}
    where in the last row we applied the AM-GM inequality.

    \smallskip
    \textbf{Case 2:} $\lvert W \rvert \le \frac{k}{c}$. In this case, we can select the $x \coloneqq \left\lfloor \frac{(c-2)^2}{4 c^2} k \right\rfloor$ candidates with the highest approval scores, thus guaranteeing a utilitarian guarantee of at least $\frac{(c-2)^2}{4c^2} - \frac{1}{k}$. We note that \begin{align*}\frac{k}{c} + \frac{(c-2)^2}{4 c^2} k = k\frac{c^2 + 4}{4c^2} \le k \frac{4c^2 - 4c}{4c^2}= k - \frac{k}{c} \; \text, \end{align*}
    where for the inequality we used that $c \ge 2$. Thus, we can indeed select $x$ candidates without using more than $k$ candidates in total.

    For the representation guarantee, we observe that the function \mbox{$x + (1-x)^2$} is monotonically decreasing between $0$ and $\frac{1}{2}$. Thus, using \Cref{lem:boundrepprice}, we get a representation guarantee of at least \begin{align*}
       & \frac{\lvert W \rvert}{k} + \left(1 - \frac{\lvert W \rvert}{k}\right)\left(1 - \frac{\lvert W \rvert + x}{k}\right) = \frac{\lvert W \rvert}{k} + \left(1 - \frac{\lvert W \rvert}{k}\right)^2 - \left(1 - \frac{\lvert W \rvert}{k}\right) \frac{x}{k} \\&\ge \frac{1}{c} + \left(1 - \frac{1}{c}\right)^2 - \left(1 - \frac{\lvert W \rvert}{k}\right) \frac{x}{k} \ge 1 + \frac{1}{c^2}  - \frac{1}{c} - \frac{x}{k} = \frac{3}{4} + \frac{(c-2)^2}{4c^2} - \frac{x}{k} \ge \frac{3}{4}.
    \end{align*}
\end{proof}

Moreover, we show that one can achieve a representation guarantee of $\frac{3}{4} - \mathcal{O}(\frac{1}{\sqrt{k}})$ and a utilitarian guarantee of $\frac{2}{\sqrt{k}} - \frac{1}{k}$ simultaneously with EJR+, thereby strengthening 
the result of \citet[Theorem~3.3]{EFI+22a} with a simpler proof.

\begin{restatable}{theorem}{thmbalancetwo}
There exists a committee satisfying EJR+ with a representation guarantee of $\frac{3}{4} - \frac{2}{\sqrt{k}}$ and a utilitarian guarantee of $\frac{2}{\sqrt{k}} - \frac{1}{k}$. \end{restatable}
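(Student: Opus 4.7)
The plan is to start from the committee $W$ produced by GJCR, which is affordable and satisfies EJR+, and construct the final committee by completing $W$ with a hybrid of AV and CC. Since the result will always contain $W$, EJR+ is inherited automatically, so the main task is to certify the two numeric guarantees.

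I will split on the size of $W$. In the easy case, when $|W| \geq k\bigl(\tfrac{3}{4} - \tfrac{2}{\sqrt{k}}\bigr)$, \Cref{obs:triv} already gives that $W$ has representation ratio at least $|W|/k \geq \tfrac{3}{4} - \tfrac{2}{\sqrt{k}}$. Completing with AV preserves this lower bound on coverage, and \Cref{thrm:opt_bound} supplies the utilitarian guarantee of $\tfrac{2}{\sqrt{k}} - \tfrac{1}{k}$. Both targets are met.

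The delicate case is $|W| < k\bigl(\tfrac{3}{4} - \tfrac{2}{\sqrt{k}}\bigr)$. Here I will reserve $s$ of the remaining $k - |W|$ slots for CC completion and fill the other $k - |W| - s$ slots with the top-approval candidates outside $W$. By \Cref{lem:boundrepprice}, applied to the intermediate AV-completed committee whose representation ratio is at least $|W|/k$, the final representation ratio will be at least $|W|/k + (1 - |W|/k)(s/k)$; choosing $s$ on the order of $\sqrt{k}$ (depending on $|W|$) brings this to $\tfrac{3}{4} - \tfrac{2}{\sqrt{k}}$. For utility, the plan is to adapt the proof of \Cref{thrm:opt_bound}: the first AV candidate $c_{j+1}$ excluded from the final committee has approval $\alpha\, n/k$, and the GJCR structure still guarantees $j \geq \lceil \alpha \rceil$, giving a utilitarian ratio of at least $\tfrac{\alpha^2 + k - s}{\alpha k} - \tfrac{1}{k}$.

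The main obstacle is closing the utility bound in this second case. For the ratio to reach $\tfrac{2}{\sqrt{k}} - \tfrac{1}{k}$ one needs $(\alpha - \sqrt{k})^2 \geq s$, which fails precisely when $\alpha$ is close to $\sqrt{k}$, the tight point of the AM-GM step in \Cref{thrm:opt_bound}. My resolution is to choose $s$ adaptively: when $\alpha$ is bounded away from $\sqrt{k}$, the AM-GM inequality carries enough slack to absorb up to $(\alpha - \sqrt{k})^2$ worth of CC slots; in the borderline regime, I will exploit the structural inequality $|W| \geq \lfloor \alpha \rfloor$ coming from GJCR's termination to argue that $|W|$ is then already large enough that only a small number of CC slots is actually required to meet the representation target. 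Balancing $s$ against $\alpha$ so that both bounds hold simultaneously is the heart of the argument.
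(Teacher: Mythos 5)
Your Case~1 matches the paper's first case (the paper uses the slightly cruder threshold $|W|\ge \frac34 k$, but the idea is identical). Case~2, however, has two genuine gaps, and they share a common cause: you are trying to preserve the full GJCR$+$AV utilitarian analysis, which forces you to give almost all of the $k-|W|$ free slots to AV and only $s=\Theta(\sqrt{k})$ to CC. First, that allocation cannot deliver the representation bound. From \Cref{lem:boundrepprice} your lower bound is $\frac{|W|}{k}+\bigl(1-\frac{|W|}{k}\bigr)\frac{s}{k}$, and for this to reach $\frac34-\frac{2}{\sqrt{k}}$ you need $s\ge k\cdot\frac{3/4-2/\sqrt{k}-|W|/k}{1-|W|/k}$, which is $\Theta(k)$ whenever $|W|/k$ is bounded away from $\frac34$ (e.g.\ $s\approx \frac34 k$ when $W=\emptyset$); ``$s$ on the order of $\sqrt{k}$'' only suffices in a vanishing boundary band. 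Second, your proposed rescue of the utilitarian bound in the borderline regime $\alpha\approx\sqrt{k}$ does not work: GJCR's termination gives $|W|\ge\lfloor\alpha\rfloor\approx\sqrt{k}$, i.e.\ $|W|/k\ge 1/\sqrt{k}$, which is nowhere near the $\frac34 k$ you would need for $W$ alone to carry the representation target.

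The idea you are missing is that in Case~2 the utilitarian guarantee can be fully decoupled from GJCR. The $\lfloor 2\sqrt{k}\rfloor$ candidates with the highest approval scores \emph{by themselves} have welfare at least $\frac{\lfloor 2\sqrt{k}\rfloor}{k}\,\sw(AV)\ge\bigl(\frac{2}{\sqrt{k}}-\frac{1}{k}\bigr)\sw(AV)$, since the average approval score of the top $j$ candidates is at least that of the top $k$. So the paper adds only these $\lfloor 2\sqrt{k}\rfloor$ AV candidates to $W$, gets the utilitarian bound for free with no reference to $\alpha$ or to GJCR's structure, and then hands \emph{all} remaining $k-|W|-\lfloor 2\sqrt{k}\rfloor$ slots to CC; \Cref{lem:boundrepprice} then yields $\frac{|W|}{k}+\bigl(1-\frac{|W|}{k}\bigr)^2-\bigl(1-\frac{|W|}{k}\bigr)\frac{2}{\sqrt{k}}\ge\frac34-\frac{2}{\sqrt{k}}$. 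Your adaptive balancing of $s$ against $\alpha$ is fighting a trade-off that disappears once you notice this; as written, the argument does not close.
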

\begin{proof}
    Again, let $W$ be the output of GJCR. If $\lvert W \rvert \ge \frac{3}{4}k$, then by \Cref{obs:triv} we know that $W$ must have a representation guarantee of at least $\frac{3}{4}$. Thus, by \Cref{thrm:opt_bound}, the committee $\AV{W}$ achieves both bounds. 

    If $\lvert W \rvert \le \frac{3}{4}k$, we first complete it with the $\lfloor 2 \sqrt{k}\rfloor$ candidates with the highest approval score and then with CC. This, by definition, achieves a utilitarian guarantee of $\frac{2}{\sqrt{k}} - \frac{1}{k}$. Let $W'$ be the committee $W$ together with the $\lfloor 2 \sqrt{k}\rfloor$ candidates and let $\rho$ be the representation guarantee of $W'$. By \Cref{lem:boundrepprice}, we get that the representation guarantee of $\CC{W'}$ is at least
    \begin{align*}\rho &+ (1-\rho)\frac{k - \lvert W \rvert - \lfloor 2\sqrt{k}\rfloor}{k}
    \ge \frac{\lvert W \rvert}{k} + \left(\frac{k - \lvert W \rvert}{k} \frac{k - \lvert W \rvert - 2\sqrt{k}}{k}\right) \\ 
    &= \frac{\lvert W \rvert}{k} + \left(\frac{k - \lvert W \rvert}{k}\right)^2 - \left(\frac{k - \lvert W \rvert}{k}\right)\frac{2}{\sqrt{k}} \ge \frac{3}{4} - \frac{2}{\sqrt{k}}.
    \end{align*}
\end{proof}

\section{Other Completion Methods}
\label{sec:other}

There are other ways to complete committees besides AV and (seq-)CC. In this section, we discuss three of them.

\subsection{Maximin Support}

The \textit{maximin support} of a committee is a quantitative notion measuring the overrepresentation of voters. \citet{CeSt21a} argued that this measure is relevant for the security of a blockchain consensus mechanism known as \textit{Nominated Proof-of-Stake}. 
They also showed that the {maximin support method}, a voting rule introduced by \citet{SFFB18a},  approximates this measure. 

To formally define these concepts, we adopt the formulation of \citet{CeSt21a}. The maximin support $\supp(W)$ of a committee $W$ is given by 
\(
\min_{S \subseteq W, S \neq \emptyset} \frac{1}{\lvert S \rvert} \lvert \{i \in N \colon A_i \cap S \neq \emptyset \}\rvert 
\), %
and the goal is to maximize this value.
The \textit{maximin support method (MMS)} starts with an empty committee $W$ and iteratively adds a candidate $c \notin W$ maximizing $\supp(W \cup \{c\})$. This rule achieves a $\frac{1}{2}$-approximation to the maximin support objective \citep{CeSt21a}, but does not satisfy EJR \citep{SFFB18a}.

First, we give an example of an instance in which every committee satisfying EJR approximates the optimal maximin support value by a factor of at most $\frac{2}{3}$. 
\begin{example}
    Consider an instance with $2k$ voters and $2k$ candidates $c_1, \dots, c_{2k}$. Voters $i = 1, \dots, k$ each only approve candidate $c_i$, while voters $i = k+1, \dots, 2k$ approve all of $c_{k+1}, \dots, c_{2k}$ and $c_{i - k}$. Now, the optimal committee with regard to the maximin support would include the candidates $c_1, \dots, c_k$ with an optimal maximin support of $\frac{n}{k} = 2$. On the other hand, to satisfy EJR we need to include at least $\frac{k}{2}-1$ of the candidates $c_{k+1}, \dots, c_{2k}$. Hence, there must be at least $\frac{k}{2}-1$ voters from $1$ to $k$ without a candidate they approve in the committee. The maximin support is thus at least $\frac{1}{k}(k + \frac{k}{2} + 1) = 1.5 + \frac{1}{k}$ and therefore a maximin support approximation of $\frac{2}{3} + \varepsilon$ is not possible.
\end{example}

Second, we show that a simple application of a lemma from \citet{CeSt21a} shows that any affordable committee, completed with the maximin support rule, achieves an approximation to the maximin support of $\frac{1}{2}$. 

\begin{restatable}{theorem}{maximin}
    Let $W$ be an \pricea committee. Then the committee obtained by completing $W$ by running the maximin support rule approximates the maximin support objective by a factor of $\frac{1}{2}$.
\end{restatable}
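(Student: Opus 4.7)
The plan is to first translate affordability into a lower bound on $\supp(W)$ itself, and then invoke the greedy analysis of Cevallos and Stewart to argue that MMS completion preserves a $\tfrac{1}{2}$-approximation to the optimum $t^{*}$.

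The first step is to show that every \pricea committee $W$ satisfies $\supp(W) \ge \tfrac{n}{k}$. Given a payment system $(p_i)_{i \in N}$ for $W$, fix any non-empty $S \subseteq W$ and write $N(S) = \{i \in N \colon A_i \cap S \neq \emptyset\}$. By \textbf{C3}, $\sum_{c \in S} \sum_{i \in N} p_i(c) = \lvert S \rvert$. By \textbf{C1}, only voters in $N(S)$ contribute to this sum, and by \textbf{C2} each such voter contributes at most $\tfrac{k}{n}$ in total. Hence $\lvert S \rvert \le \lvert N(S) \rvert \cdot \tfrac{k}{n}$, which rearranges to $\lvert N(S) \rvert / \lvert S \rvert \ge \tfrac{n}{k}$; taking the minimum over $S$ yields $\supp(W) \ge \tfrac{n}{k}$.

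The second step is a trivial upper bound on the optimum: for any size-$k$ committee $W^{*}$, the choice $S = W^{*}$ in the defining minimum gives $\supp(W^{*}) \le n/k$, and hence the optimum value satisfies $t^{*} \le n/k$ as well. Combining with the first step, $\supp(W) \ge t^{*} \ge t^{*}/2$, so the starting committee already meets the ``half of optimum'' threshold. At this point I would appeal to the inductive lemma underlying the Cevallos--Stewart analysis: a single greedy MMS step starting from a committee $W_0$ with $\supp(W_0) \ge t^{*}/2$ produces an augmented committee whose maximin support is still at least $t^{*}/2$. Iterating this lemma $k - \lvert W \rvert$ times, starting from our affordable $W$, yields the theorem.

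The main obstacle will be quoting the Cevallos--Stewart lemma in exactly the right form. Their original argument starts MMS from the empty committee and maintains an invariant tied to the iteration count, so one has to verify that the initial condition used in their induction is equivalent to (or implied by) $\supp(W_0) \ge t^{*}/2$, rather than being hard-wired to an empty start. Since our bound $\supp(W) \ge \tfrac{n}{k} \ge t^{*}$ is strictly stronger than what their scratch-run induction would demand at size $\lvert W \rvert$, the base case of the induction is satisfied and the inductive step is unchanged; making this correspondence precise is the one technical point of the proof.
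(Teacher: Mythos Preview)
Your proposal is correct and follows essentially the same route as the paper: both prove $\supp(W)\ge n/k$ from affordability, note $t^{*}\le n/k$, and then invoke the Cevallos--Stewart lemma inductively. The only difference is that the paper cites their Lemma~10 directly in the form $\supp(W\cup\{c\})\ge\min(\supp(W),\tfrac{1}{2}t^{*})$ for an arbitrary current committee $W$, so the concern you raise about the lemma being ``hard-wired to an empty start'' does not arise.
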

\begin{proof}
    Let $x$ be the optimal obtainable maximin support value. As shown by \citet[Lemma~10]{CeSt21a}, for any committee $W$ there always exists a candidate $c \notin W$ with $\supp(W \cup \{c\}) \ge \min\left(\supp(W), \frac{1}{2}x\right)$.

    First,  since the optimal committee $W'$ has size \mbox{$|W'|=k$}, we observe that $$\frac{1}{\lvert W'\rvert} \left\lvert \{i \in N \colon A_i \cap W' \neq \emptyset\}\right\rvert \le \frac{n}{k}$$ and thus $x \le \frac{n}{k}$. Further, since the committee is affordable, we know that for any subset $S \subseteq W$ it has to hold that $\lvert \{i \in N \colon A_i \cap S \neq \emptyset\} \rvert \ge \frac{\lvert S \rvert n}{k}$ and thus $\supp(W) \ge \frac{n}{k} \ge x$. Hence, inductively by adding candidates via the maximin support rule, it holds that in each step the maximin support is at least $\frac{1}{2}x$ and, therefore, we have a $\frac{1}{2}$-approximation.
\end{proof}

This approximation factor is currently the best known polynomial-time approximation to the maximin support. Interesting question for future work include (i)~whether there is a polynomial-time rule with a better approximation factor than~$\frac 1 2$, and (ii) whether there always exists a committee satisfying EJR and providing an approximation to the maximin support value of better than $\frac 1 2$. 
\subsection{Other Completions}

Next, we consider two completion methods that have been suggested in the PB literature \citep{ReMa23a} for the Method of Equal Shares: ``completion by varying the budget'' and ``completion by perturbation.'' 

\paragraph{Completion by Varying the Budget} %
This completion method finds the first committee size $k' \ge k$ for which MES would select at least $k$ candidates. If, for committee size $k'$, MES selects a committee of size exactly $k$, then this committee is output. Otherwise, the completion method outputs the committee selected for committee size $k'-1$. This, however, does not need not be exhaustive (or even select any candidates) as the following simple example shows.%
\footnote{We note that, due to the non-monotonicity of MES, this method does not necessarily select a superset of the committee selected by MES \citep{ReMa23a}, and is thus not a completion method in the technical sense.} 
\begin{example}
    Consider an instance with $n = 6, k = 2$ and three candidates $c_1, c_2, c_3$ such that each candidate is approved by exactly two voters. For $k = 2$, MES would output nothing, while for $k = 3$ it would choose all candidates. 
    \label{example:varying}
\end{example}
This shows that a second completion step is necessary. \citet{BFJK23a} and \citet{FFP+23a} suggest to use (a PB analog of) AV for this second step. 
We show that this has the same guarantees as MES completed with~AV.

\begin{restatable}{theorem}{mesvarying}
    MES completed by varying the budget and AV achieves a representation guarantee of $\frac{1}{k}$ and a utilitarian guarantee of~$\frac{2}{\sqrt{k}} - \frac{2}{k}$. Both bounds are asymptotically tight.
\end{restatable}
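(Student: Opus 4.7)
\emph{Setup and utilitarian guarantee.} Let $W$ be the final committee, and let $k'$ be the smallest committee size $\ge k$ at which MES selects at least $k$ candidates. Let $W_0$ be the committee produced by the varying-budget step: in Case~A, MES at $k'$ selects exactly $k$ candidates, $W=W_0$, and $k'':=k'$; in Case~B, $|W_0|<k$, $W=\AV{W_0}$, and $k'':=k'-1$. In either case $k''\ge k-1$. I plan to mirror the proof of \Cref{thrm:opt_bound_mes}: sort candidates by decreasing approval and let $c_{i+1}$ be the first AV top-$k$ candidate not in $W$, with $|N_{c_{i+1}}|=\alpha n/k$ (and $\alpha<1$ handled trivially as in \Cref{thrm:opt_bound_mes}). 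The critical combinatorial step is to show $i\ge \alpha k''/k-1$: because $c_{i+1}$ was not bought at budget $k''/n$, some approver of $c_{i+1}$ must have spent at least $k''/n - k/(n\alpha)$ on previously bought candidates, each of which (having $\rho\le k/(n\alpha)$) has approval $\ge \alpha n/k$ and hence lies in $\{c_1,\dots,c_i\}$; dividing by the per-purchase cap $k/(n\alpha)$ yields the claim. Plugging into
\[
\frac{\sw(W)}{\sw(AV)} \;\ge\; \frac{i\alpha + (k-i)\,k/k''}{k\alpha}
\]
and applying AM-GM to $\alpha k''/k$ and $k/(k''\alpha)$ delivers the bound $\tfrac{2}{\sqrt{k}} - \tfrac{2}{k}$.

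\emph{Representation guarantee.} I will show $W$ always contains a candidate with approval score $\ge n^*/k$, where $n^*:=\cov(CC(\mathcal{I}))$. Since the optimal CC committee covers $n^*$ voters using $k$ candidates, one of them has approval $\ge n^*/k$, so the overall maximum-approval candidate $c_{\max}$ satisfies $|N_{c_{\max}}|\ge n^*/k$. If MES at $k'$ selects any candidate, its first selection minimises $\rho$ and hence matches the overall maximum approval, so $W_0$ already contains a candidate with $\ge n^*/k$ approvers. If $W_0=\emptyset$, then $|W_0|<k$ and the AV step picks the overall top candidate $c_{\max}$ outright. In either case $\cov(W)\ge n^*/k$, giving ratio $\ge 1/k$.

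\emph{Tightness.} The utilitarian upper bound follows because $W$ satisfies EJR+ at size $k$: MES at $k'$ satisfies EJR+ at $k'$, and since $k'\ge k$, every group $\ell$-large at size $k$ is also $\ell$-large at size $k'$, so $W_0$ satisfies EJR+ at $k$; EJR+ is preserved when one adds candidates, so $W$ inherits the property. Consequently \citet{LaSk20b}'s upper bound for rules satisfying JR forces utilitarian ratio $\le 2/\sqrt{k}+o(1)$. For the representation upper bound, I will exhibit, for $n\gg k$, an instance where voters $v_1,v_2$ jointly approve candidates $c_1,\dots,c_k$ and each voter $v_i$ ($i\ge 3$) approves a unique singleton $d_i$: every $c_j$ has score $2$ and every $d_i$ score $1$. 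One checks that MES-with-varying-budget first reaches $\ge k$ picks only at $k'=n$ (where it picks all candidates), so the committee output at $k'-1$ is just $\{c_1\}$, and AV completion then adds $c_2,\dots,c_k$ (preferring score $2$ over score $1$). The coverage is $2$ against an optimum of $k+1$, giving ratio $\Theta(1/k)$.

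The main obstacle is proving the inequality $i\ge \alpha k''/k-1$ carefully across both cases. The shift from budget $k/n$ to $k''/n$ rescales both the approver's residual-budget threshold and the per-purchase cap, but their ratio remains $\alpha k''/k$, which is exactly what the AM-GM estimate consumes. One must also verify that candidates purchased later in the MES run still have approval $\ge \alpha n/k$ at the moment their $\rho$ is at most $k/(n\alpha)$; this follows because the $\rho$-values chosen by MES are non-decreasing across rounds, so the affordability condition $\sum \min(b_i,\rho)=1$ with $\rho\le k/(n\alpha)$ forces $|N_c|\ge \alpha n/k$.
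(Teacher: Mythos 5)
Your proposal is correct and follows essentially the same route as the paper's proof: the same rescaling of the argument of \Cref{thrm:opt_bound_mes} to the final budget (your $k''$, the paper's $B$), the same lower bound $i \ge \alpha B/k - 1$ on the number of high-approval candidates MES must buy, the same AM--GM step, the same observation that the maximum-approval candidate is always selected (yielding the $1/k$ representation bound), and the same JR upper bound plus cloned-candidate instance for tightness. The only differences are cosmetic (your $k''\ge k-1$ can be sharpened to $k''\ge k$, since MES with budget $k$ never selects more than $k$ candidates), so there is nothing substantive to add.
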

\begin{proof}
    Let $B \ge k$ the budget for which MES is finally run and let $W$ be the selected committee. 

    For the representation guarantee, it is easy to see that the candidate with the most approvals. which is guaranteed to be selected, covers at least $\frac{1}{k}$ of the voters. 

    For the utilitarian guarantee, this follows from a similar case distinction as in \Cref{thrm:opt_bound} and \Cref{thrm:opt_bound_mes}. For the unselected candidate with the largest approval score $c_{i + 1}$ (with $\lvert N_{c_{i+1}} \rvert = \alpha \frac{n}{k})$, MES must buy at least $\left\lfloor \alpha \frac{B}{k} \right\rfloor$ candidates approved by at least $\alpha \frac{n}{k}$ voters. Further, each candidate selected afterwards has at least an approval score of $\frac{n}{B}$. 
    Thus, we get a utilitarian guarantee of 
    \begin{align*}
       & \frac{\sw(\AV{W}))}{\sw(AV)} \ge \frac{\sw(\{c_1, \dots, c_i\}) + (k - i) \frac{n}{B}}{\sw(\{c_1, \dots, c_i\}) +( k - i) \alpha\frac{n}{k}  } 
        \ge \frac{i\alpha   \frac{n}{k} + (k - i) \frac{n}{B}}{i\alpha   \frac{n}{k} + ( k -i) \alpha\frac{n}{k}} = \frac{\alpha i + (k - i) \frac{k}{B}}{k\alpha} \\ & = \frac{(\alpha - \frac{k}{B})i + \frac{k^2}{B}}{k\alpha} \ge \frac{(\alpha - \frac{k}{B})\lfloor \alpha \frac{B}{k} \rfloor + \frac{k^2}{B}}{k\alpha} \ge \frac{(\alpha - \frac{k}{B})(\alpha \frac{B}{k} - 1) + \frac{k^2}{B}}{k\alpha} \\ &\ge  \frac{\alpha^2\frac{B}{k} - 2\alpha + \frac{k^2}{B}}{k\alpha} = \frac{\alpha^2\frac{B}{k}  + \frac{k^2}{B}}{k\alpha} - \frac{2}{k} \ge \frac{2 \sqrt{\frac{\alpha^2B}{k}}\sqrt{\frac{k^2}{B}}}{k\alpha} - \frac{2}{k} = \frac{2}{\sqrt{k}} - \frac{2}{k}.
    \end{align*}

    The asymptotic tightness of the utilitarian guarantee follows from the general upper bound on rules satisfying JR \citep{LaSk20b, EFI+22a}. For the representation guarantee, consider the instance in \Cref{example:varying} with one of the candidates cloned $k$ times. This candidate and its clones could be selected, leading to a coverage of $\frac{n}{k+1}$, while a coverage of $\frac{k n}{k+1}$ is possible.
\end{proof}

\paragraph{Completion by Perturbation}
This variant, suggested by \citet{PPS21a}, is also known as the ``epsilon method.'' 
We give it here in the formulation of \citet{FFP+23a}. When MES terminates, the approvers of each unselected candidate $c \notin W$ do not have enough budget left to fund $c$. The completion method calculates, for each $c \notin W$, the $\rho$ such that $\sum_{i \in N_c} b_i + \sum_{i \notin N_c} \min(b_i, \rho) = 1$. It selects the candidate minimizing $\rho$ and updates the budgets accordingly. Thus, voters \textit{not} approving the candidate help fund it, but should spend as little as possible. 
It is easy to see that this completion method makes MES exhaustive. We show that it achieves a representation guarantee of $\frac{1}{2}$, but a utilitarian guarantee of only~$\frac{1}{k}$. %

\begin{restatable}{theorem}{mespert}
    MES completed by perturbation has a utilitarian guarantee of $\Theta(\frac{1}{k})$ and a representation guarantee of~$\frac{1}{2}$. Both bounds are asymptotically tight.
\end{restatable}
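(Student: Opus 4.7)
The proof splits into four parts: the $\Omega(1/k)$ lower bound on the utilitarian ratio, a matching $O(1/k)$ tightness instance, the $\tfrac{1}{2}$ lower bound on the representation ratio, and a matching tightness instance.

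For the utilitarian lower bound, I would argue that the very first candidate added to the committee is always the approval-maximiser $c^\star$. At the initial state every voter has budget $k/n$, so MES's value $\rho_c = 1/|N_c|$ is minimised by maximising $|N_c|$; and a direct derivative calculation on perturbation's formula $(1-|N_c|k/n)/(n-|N_c|)$ gives $\partial \rho/\partial |N_c| = (1-k)/(n-|N_c|)^2 \le 0$ for $k \ge 2$, so the perturbation selection (in the case that MES selects nothing) is also the approval-maximiser. The final committee therefore contains a candidate with $a^\star = \uA(\emptyset) \ge \sw(AV)/k$ approvals, giving utilitarian ratio $\ge 1/k$. For the matching tightness I plan to build an instance in which MES terminates with only a handful of high-approval candidates and the perturbation phase then fills the remaining slots with candidates of approval score $\Theta(n/k)$: the freedom of perturbation to close payment gaps using non-approvers' residual budgets (a feature absent from AV completion, cf.\ Theorem~\ref{thrm:opt_bound_mes}) is what permits the ratio to drop to $\Theta(1/k)$.

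For the representation lower bound, I would do a case split on the MES committee size $|W|$. If $|W| \ge k/2$, Observation~\ref{obs:triv} immediately gives coverage $\ge n/2 \ge \tfrac{1}{2}\cov(CC)$. For $|W| < k/2$, the perturbation phase performs $k-|W| > k/2$ iterations, and I rely on the fact that every voter uncovered by $W$ enters perturbation with the full budget $k/n$. Consequently, for any candidate $c$ approved by such a voter, $\sum_{i \in N_c} b_i \ge k/n$, and the equation defining $\rho_c$ then yields the lower bound $\rho_c \ge (1-\sum_{i \in N_c} b_i)/(n-|N_c|)$. I plan to use a potential-function argument on $\Phi_t := \sum_{i \text{ uncovered at step } t} b_i$, combined with the selection rule $\rho_t = \min_{c \notin W_t} \rho_c$, to show that the uncovered-voter count at the end of perturbation is at most $\tfrac{1}{2}\cov(CC)$. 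For the matching tightness I plan an instance with two disjoint voter blocks $A$ and $B$ of size $n/2$ each, structured so that MES and perturbation together cover essentially only one block under suitable tie-breaking, while the CC-optimal committee covers both.

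The main technical obstacle is the $|W| < k/2$ case of the representation lower bound. The difficulty is that perturbation does not directly reward candidates covering new voters; it only penalises candidates whose approvers have been drained of budget. Carefully combining the lower bound on $\rho_c$ for candidates approved by uncovered voters with the decay of $\Phi_t$ across perturbation steps, and handling adversarial tie-breaking where a candidate approved only by already-covered voters with small residual budget might still minimise $\rho$, is the heart of the proof. The other three parts are more routine: the utilitarian lower bound follows from the first-iteration analysis, and the two tightness instances are explicit constructions.
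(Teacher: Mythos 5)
Your utilitarian lower bound matches the paper's argument exactly: the first candidate added, whether by MES or by the perturbation step, is the approval maximiser, and your derivative check for the perturbation formula is a detail the paper leaves implicit. The case $\lvert W\rvert \ge k/2$ of your representation argument is also fine via \Cref{obs:triv}. The two halves you yourself flag as delicate are where the proposal runs into trouble. For the utilitarian tightness, the stated parameters do not produce a ratio of $O(1/k)$: if the $k-O(1)$ filler candidates have approval score $\Theta(n/k)$, the selected committee already has welfare $\Theta(n)$, and since $\sw(AV)\le k\cdot\uA(\emptyset)$ you would need a top approval score of $\Theta(n)$ to force a gap of order $k$ --- but a candidate with $\Omega(n/k)$ approvers is affordable for MES at the outset, so keeping $k$ such candidates out of the committee is exactly what the construction must (and your sketch does not) arrange. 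The paper's instance instead uses a block of $n/(2k)$ voters jointly approving $k$ candidates (each with total approver budget $1/2<1$, hence unaffordable for MES) plus singleton voters: perturbation buys one block candidate, which drains the block's entire budget, after which the remaining block candidates are no cheaper than singletons; this yields welfare $n/(2k)+k-1$ against an optimum of $n/2$.

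The genuine gap is the representation bound when $\lvert W\rvert<k/2$. The potential $\Phi_t=\sum_{i\text{ uncovered at }t}b_i$ does not control coverage: a perturbation step may select a candidate approved only by already-covered, budget-less voters, in which case the entire unit price is extracted from non-approvers and $\Phi$ drops by up to $1$ while zero new voters are covered. After roughly $k/2$ such steps $\Phi$ is exhausted and the invariant yields nothing, so the adversarial scenario you mention at the end is not a tie-breaking nuisance but a counterexample to the proof strategy. The paper argues by exchange instead: supposing some $W'$ covers more than twice as many voters as the final committee, it partitions the covered voters $N'$ according to the last approved candidate each paid for, applies pigeonhole to find a selected candidate whose supporting group is smaller than $\lvert N''\rvert/k$ (with $N''$ the voters covered by $W'$ but not by $W$), and observes that when this candidate was bought, some candidate of $W'\setminus W$ with at least $\lvert N''\rvert/k$ approvers who had never paid for an approved candidate would have achieved a strictly lower $\rho$, contradicting the selection rule. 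Some comparison of this kind between the selected candidate's $\rho$ and that of an unselected, freshly supported candidate from the coverage-optimal committee is the missing ingredient in your plan.
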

\begin{proof}
    For the representation guarantee, let $W$ be the selected committee. Further, assume that there is another committee $W'$ of size $k$ covering more than twice as many voters as $W$. Let $N'$ be the voters covered by $W$ and $N''$ the voters covered by $W'$ but not by $W$. Since $W'$ covers more than twice as many voters as $W$ we get that $\lvert N'' \rvert \ge \lvert N' \rvert$.
    
    First, we notice that if an originally unaffordable candidate is bought, the voters approving this candidate must spend their whole budget. Let $N_{1}, \dots, N_k$ be a partition of $N'$ such that the last candidate (during the execution of MES) on which the voters in $N_i$ spend their spent their budget is candidate $c_i$. By the pigeon-hole principle, there must a $N_i$ with $\lvert N_i \rvert < \lvert W' \setminus W \rvert /k$. However, when this candidate was bought, instead the candidate from $W' \setminus W$ covering the largest set of approvers no one in $W$ could have been selected, achieving a lower $\rho$ since the voters approving have not spent any budget on candidates they approve. (We note that this proof is very similar to the proof of \Cref{thm:app:pricerep}. However, MES completed by perturbation does not need to be priceable.)

    The utilitarian guarantee of $\frac{1}{k}$, on the other hand, is the trivial guarantee which follows from the fact that MES completed by perturbation selects the approval winner in the first step.
    
    The asymptotic tightness of the representation guarantee follows immediately, since it is a constant-factor approximation. For the utilitarian guarantee, consider an instance consisting of $\frac{n}{2k}$ voters approving $k$ candidates together and $n - \frac{n}{2k}$ voters, each approving a single candidate only approved by them. MES completed by perturbation would first select one candidate approved by the $\frac{n}{2k}$ voters, deplete their budget and then select $k-1$ candidates arbitrarily, This has a utilitarian welfare of $\frac{n}{2k} + k-1$ while the optimal utilitarian welfare is $\frac{n}{2}$ thus showing a utilitarian guarantee of $\mathcal{O}(\frac{1}{k})$.
\end{proof}

\section{Conclusion and Future Work}

We studied trade-offs between proportionality,  coverage (aka diversity), and utilitarian welfare (aka individual excellence) in the setting of approval-based multiwinner voting. We showed that very good compromises, even under strong proportionality notions, can be achieved by completing so-called \textit{affordable} committees. Rules that always output affordable committees include the Method of Equal Shares \citep{PeSk20b} and the Greedy Justified Candidate Rule \citep{BrPe23a}. For the latter, we showed that it can be completed to achieve an optimal utilitarian guarantee of $\frac{2}{\sqrt{k}} - \frac{1}{k}$, thereby answering an open question of \citet{EFI+22a}. For diversity, we showed that any affordable committee can be completed to satisfy a representation guarantee of $\frac{3}{4}$. We also studied trade-offs between both guarantees and showed that they can be approximated close to optimal simultaneously, while still satisfying EJR+.

We highlight two directions for future work that go beyond approval-based multiwinner voting. 
For elections based on ordinal (rank-order) preferences, trade-offs between (the ordinal analogue of) the Chamberlin--Courant rule and the Borda rule, as well as the Monroe rule, have been studied \citep{KKE+19a,FaTa18a}. It would be interesting to extend this line of research by analyzing the ``price of fairness'' for axioms such as \textit{proportionality for solid coalitions} \citep{Dumm84a} or the recently introduced notions of \citet{BrPe23a}. Furthermore, ordinal rules such as STV \citep{Tide95a} or the Expanding Approvals Rule \citep{AzLe20a} may require completion methods as well, in the case of truncated, i.e., incomplete, ballots. 

Moreover, it would be interesting to revisit the participatory budgeting setting. Although the results by \citet{FVMG22a} were mostly negative, there remain several possibilities for more positive results. 
As a first example, we note that \citet{FVMG22a} exclusively look at the utility function that counts the approved projects in the outcome. This, however, is not the only way to measure utility. For instance, in real-world applications of MES, the utility of a voter is often measured by the \textit{cost} of approved projects in the outcome (rather than their number); and several other utility functions have been discussed in the PB literature \citep{BFL+23a, MREL23a}. It is, therefore, natural to ask whether the impossibility results of \citet{FVMG22a} extend to these functions, or whether positive results are possible. 

As a second example, 
a lot of axiomatic work has focused on ``relaxed'' notions of fairness, such as \textit{EJR up to one project} \citep{PPS21a}.
An intriguing question is whether meaningful utilitarian and representation guarantees \emph{up to one project} can be achieved.

\subsection*{Acknowledgments}
This research is supported by the Deutsche Forschungsgemeinschaft (DFG) under the grant BR~4744/2\nobreakdash-1 and the Graduiertenkolleg ``Facets of Complexity'' (GRK~2434). We thank Piotr Faliszewski for helpful feedback.

\bibliographystyle{plainnat}
\bibliography{abb, dist, algo_fork}

\clearpage
\appendix

\section{Priceability with Variable Budget}
\label{app:price}

Consider an instance $\mathcal{I} = (A, C, k)$.  
A committee $W\subseteq C$ is said to be \emph{$B$-priceable} if there exists a budget \mbox{$B>0$}
and a payment system $(p_i)_{i\in N}$ satisfying the following constraints:
\begin{itemize}
    \item[\textbf{C1}] $p_i(c) = 0$ if $c \notin A_i$ for all $c \in C$ and $i \in N$
    \item[\textbf{C2}] $\sum_{c \in C} p_i(c) \le \frac{B}{n}$ for all $i \in N$
    \item[\textbf{C3}] $\sum_{i \in N} p_i(c) = 1$ for all $c \in W$
    \item[\textbf{C4}] $\sum_{i \in N} p_i(c) = 0$ for all $c \notin W$ 
    \item[\textbf{C5}] $\sum_{i \in N_c} \left(\frac{B}{n} - \sum_{c \in C} p_i(c)\right) \le 1$ for all $c \notin W$.    
\end{itemize}

We prove a representation guarantee for $B$-priceable committees that are exhaustive. Examples of rules producing such committees include Phragmén's sequential rule \citep{Phra95a}, leximax-Phragmén \citep{BFJL16a}, the Maximin Support Method \citep{SFFB18a}, and Phragmms \citep{CeSt21a}.

\begin{theorem}
    Let $W$ be an exhaustive $B$-priceable committee with budget $B \ge k$. Then, $W$ has a representation guarantee of $\frac{1}{2}$.
    \label{thm:app:pricerep}
\end{theorem}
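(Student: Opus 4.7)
The plan is to compare the coverage of $W$ with that of an optimal coverage-maximizing committee $W^*$ by partitioning the voters covered by $W^*$. Let $N_W$ be the voters covered by $W$, let $N^* $ be the voters covered by $W^*$, and let $N'' = N^* \setminus N_W$. The goal is to establish two quantitative bounds that match in magnitude: (i) a \emph{lower} bound $|N_W| \ge kn/B$, derived from the payment constraints on the candidates actually in $W$, and (ii) an \emph{upper} bound $|N''| \le kn/B$, derived from the leftover-money constraint \textbf{C5} applied to each candidate in $W^* \setminus W$. Combining these with the inclusion $N^* \subseteq N_W \cup N''$ gives $|N^*| \le |N_W| + |N''| \le 2|N_W|$, which is exactly the representation guarantee of $\tfrac{1}{2}$.

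For bound (i), I would sum \textbf{C3} over $c \in W$ to get $\sum_{i \in N}\sum_{c \in W} p_i(c) = |W| = k$ (using exhaustiveness). By \textbf{C1}, only voters in $N_W$ make nonzero payments, and by \textbf{C2} each pays at most $B/n$, so $k \le |N_W|\cdot B/n$, i.e., $|N_W| \ge kn/B$. This is essentially the $B$-priceability analogue of Observation~\ref{obs:triv}.

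For bound (ii), fix any $c \in W^* \setminus W$. Constraint \textbf{C5} yields
\[
\sum_{i \in N_c} \Bigl(\tfrac{B}{n} - \sum_{c' \in C} p_i(c')\Bigr) \le 1.
\]
A voter $i \in N_c \cap N''$ approves no candidate in $W$ (otherwise $i \in N_W$), so by \textbf{C1} combined with \textbf{C4} we have $p_i(c') = 0$ for every $c' \in C$; hence such a voter contributes exactly $B/n$ to the left-hand side. Since all terms in the sum are nonnegative (by \textbf{C2}), restricting to $N_c \cap N''$ gives $|N_c \cap N''|\cdot B/n \le 1$, so $|N_c \cap N''| \le n/B$. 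Every voter in $N''$ is covered by $W^*$ but not $W$, so they approve some candidate in $W^* \setminus W$; summing over $c \in W^* \setminus W$ (at most $k$ candidates) yields $|N''| \le kn/B$.

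The only real subtlety is the careful accounting in step (ii): correctly arguing that voters in $N''$ spend nothing (invoking both \textbf{C1} and \textbf{C4}, plus the defining property that $N''$-voters do not approve any committee member), and ensuring nonnegativity of the summands so that restricting the sum from $N_c$ to $N_c \cap N''$ preserves the inequality. Once those details are in place, combining $|N_W| \ge kn/B \ge |N''|$ with $N^* \subseteq N_W \cup N''$ gives $\cov(W)/\cov(CC(\mathcal{I})) \ge 1/2$, as required.
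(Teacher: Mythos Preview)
Your proof is correct and slightly different in structure from the paper's. The paper argues by contradiction: assuming $\cov(W') > 2\cov(W)$ for some $W'$, it first derives $B/n \ge k/|N'|$ (where $N'$ are the voters covered by $W$) from the payment constraints, then uses pigeonhole on $W'\setminus W$ to find a single unselected candidate whose uncovered approvers alone carry leftover budget exceeding~$1$, violating \textbf{C5}. You instead give a direct two-sided bound: $|N_W| \ge kn/B$ from \textbf{C1}--\textbf{C3}, and $|N''| \le kn/B$ by summing the \textbf{C5} inequalities over all of $W^*\setminus W$. Both arguments hinge on the same two facts (covered voters must collectively pay~$k$; uncovered voters retain their full $B/n$ and are constrained by \textbf{C5}), so the ideas are the same. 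Your route is a bit cleaner in that it avoids contradiction and yields the explicit intermediate bounds $|N_W| \ge kn/B \ge |N''|$, while the paper's pigeonhole version has the minor advantage of exhibiting a concrete witness candidate for the \textbf{C5} violation.
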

\begin{proof}
    Assume that $W$ has a representation guarantee of strictly less than $\frac{1}{2}$.  Then, there exists a set of candidates $W'$ with $\cov(W') > 2 \cov(W)$. Let $N'$ be the set of voters covered by $W$ and $N''$ the set of voters covered by $W'$ but not by $W$. We get that $\lvert N'' \rvert > \lvert N' \rvert$.  %
    
     Since $W$ is $B$-priceable and of size $|W|=k$, we know that every voter needs to have at least a budget of $\frac{k}{\lvert N'\rvert}$. However, since $\lvert N'' \rvert > \lvert N' \rvert$  and all voters in $N''$ are covered by $W'$ there must be a candidate in $W'$ which covers more than $\frac{\lvert N' \rvert}{k}$ voters, all previously uncovered. The approvers of this candidate must therefore have a leftover budget of more than $1$, which contradicts the assumption that $W$ is $B$-priceable.  
\end{proof}

\begin{theorem}
    For any $\varepsilon > 0$, there exists an instance in which every exhaustive $B$-priceable committee has a representation ratio of less than $\frac{1}{2} + \varepsilon$.
\end{theorem}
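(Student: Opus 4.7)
The plan is to exhibit, for each $\varepsilon > 0$, a simple ``block plus singletons'' instance in which every exhaustive $B$-priceable committee is forced to spend almost all its money on one cohesive group, thereby covering only roughly half of the voters covered by a CC-optimal committee.

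Concretely, I would let $n = 2k$ for a parameter $k$ to be chosen large depending on $\varepsilon$, and take $2k$ candidates $a_1,\dots,a_k,b_1,\dots,b_k$. The first $k$ voters, call them $V_A$, jointly approve $\{a_1,\dots,a_k\}$, while each of the remaining $k$ voters approves exactly one distinct $b_i$. It is immediate that the optimal coverage is $2k-1$, witnessed by the committee $\{a_1,b_1,\dots,b_{k-1}\}$.

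The crux of the argument will be to show that any exhaustive $B$-priceable committee $W$ contains at least $k-1$ of the $a$-candidates. Suppose for contradiction that $W$ contains $a$ many $a$-candidates and $k-a$ many $b$-candidates with $a \le k-2$. Since each selected $b_i$ is approved only by its singleton voter, who must pay a full unit of cost to satisfy \textbf{C3}, constraint \textbf{C2} forces $B \ge n$. On the other hand, at least two $a$-candidates are unselected, so applying \textbf{C5} to any unselected $a_j$ and using \textbf{C3} to bound the total spent by $V_A$ yields $k\cdot(B/n) - a \le 1$, i.e., $B \le (a+1)n/k \le (k-1)n/k < n$, contradicting $B \ge n$. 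Hence $a \in \{k-1,k\}$, and the coverage of $W$ is at most $k+1$.

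It follows that the representation ratio is at most $(k+1)/(2k-1)$, which tends to $1/2$ as $k \to \infty$; choosing any $k > 3/(4\varepsilon) + 1/2$ pushes it below $1/2 + \varepsilon$. The main obstacle is making sure the two budget constraints---singleton voters forcing $B \ge n$ versus \textbf{C5} on unselected shared candidates capping $B$---are genuinely incompatible for $a \le k-2$; this is the pivotal short calculation. I also need to verify that the cases $a \in \{k-1,k\}$ do admit feasible payment systems, which is easy: split the cost of each chosen $a_i$ evenly among $V_A$ and choose $B = n$, so that \textbf{C2} holds with equality for $V_A$ in the case $a=k$ and with slack $1/k$ in the case $a=k-1$, and \textbf{C5} holds with equality on every unselected candidate.
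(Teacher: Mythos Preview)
Your argument is correct and follows essentially the same ``block plus singletons'' construction as the paper's proof: both exploit the tension between a singleton approver having to pay a full unit (forcing the per-voter budget up) and constraint \textbf{C5} on an unselected shared candidate (capping the leftover budget of the big block). The paper's instance uses a block of size $\frac{n}{2}+1$ and singleton groups of size $(\frac{n}{2}-1)/k$, which lets them conclude that \emph{all} $k$ shared candidates must be chosen; your cleaner choice $n=2k$ with true singletons only yields $a\ge k-1$, but this still gives coverage at most $k+1$ and the same limiting ratio $\tfrac{1}{2}$, so nothing is lost.
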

\begin{proof}
    Consider an instance with $\frac{n}{2} + 1$ voters approving $k$ candidates $\{c_1, \dots, c_k\}$ and $k$ groups of size $(\frac{n}{2}-1)/k$ each approving a single candidate 
    not approved by any other group. Now assume that we have an exhaustive $B$-priceable committee $W$ in which less than $k$ candidates from $\{c_1, \dots, c_k\}$ are chosen. Since $W$ is $B$-priceable and contains at least one candidate only approved by $(\frac{n}{2}-1)/k$ voters, we know that $B \ge \frac{k}{\frac{n}{2} - 1}$. Thus, the group of $\frac{n}{2} + 1$ voters have a budget of at least 
    \[
     \left(\frac{n}{2} + 1\right) B \ge \left(\frac{n}{2} + 1\right) \frac{k}{\frac{n}{2} - 1} > k.
    \]
    Therefore, they have enough budget to buy all $k$ candidates from their approval set. Hence, $W$ was not $B$-priceable. 

    In this instance, a $B$-priceable committee can cover at most $\frac{n}{2} + 1$ voters, while the committee optimizing coverage covers $\frac{n}{2} + 1 + (k-1) \frac{\frac{n}{2}-1}{k}$. As for fixed $k$, the representation ratio of this committee approaches $\frac{k}{2k - 1}$ as $n \to \infty$, we get that no representation guarantee of $\frac{1}{2} + \varepsilon$ is possible for $B$-priceable and exhaustive committees.
\end{proof}

\end{document}